\def\mathP{\mathrm{Pr}}
\def\mathR{\mathbb{R}}
\def\equdist{\stackrel{\text{d}}{=}}
\newtheorem{theorem}{Theorem}
\newtheorem{corollary}{Corollary}
\newtheorem{lemma}{Lemma}
\newcommand{\blind}{1}
\begin{document}

\def\spacingset#1{\renewcommand{\baselinestretch}%
{#1}\small\normalsize} \spacingset{1}

\if1\blind
{
	\title{Tractable Bayes of Skew-Elliptical Link Models for Correlated Binary Data}
	\author{Zhongwei Zhang\footnotemark[1]\thanks{CEMSE Division, King Abdullah University of Science and Technology, Thuwal, Saudi Arabia}, Reinaldo B. Arellano-Valle\footnotemark[2]\thanks{Departamento of Estad\'istica, Pontificia Universidad Cat\'olica de Chile, Santiago, Chile}, Marc G. Genton\footnotemark[1], Rapha\"el Huser\footnotemark[1]}
	\date{}
	\maketitle
} \fi

\if0\blind
{
	\bigskip
	\bigskip
	\bigskip
	\begin{center}
		{\LARGE\bf Title}
	\end{center}
	\medskip
} \fi

\medskip

\begin{abstract} 
Correlated binary response data with covariates are ubiquitous in longitudinal or spatial studies.
Among the existing statistical models the most well-known one for this type of data is the multivariate probit model, which uses a Gaussian link to model dependence at the latent level.
However, a symmetric link may not be appropriate if the data are highly imbalanced.
Here, we propose a multivariate skew-elliptical link model for correlated binary responses, which includes the multivariate probit model as a special case.
Furthermore, we perform Bayesian inference for this new model and prove that the regression coefficients have a closed-form unified skew-elliptical posterior.
The new methodology is illustrated by application to COVID-19 data from three different counties of the state of California, USA.
By jointly modeling extreme spikes in weekly new cases, our results show that the spatial dependence cannot be neglected.
Furthermore, the results also show that the skewed latent structure of our proposed model improves the flexibility of the multivariate probit model and provides better fit to our highly imbalanced dataset.
\end{abstract}
\noindent%
{\it Keywords:} Asymmetric link model; Correlated binary data; COVID-19 pandemic; Markov Chain Monte Carlo; Tractable Bayes; Unified skew-elliptical distribution.

\spacingset{1.4} 

\section{Introduction}\label{intro}
Correlated binary response data with covariates frequently arise in longitudinal \citep{Fitzmaurice1995,Fitzmaurice2008} or spatial studies \citep{Heagerty1998,LinClayton2005}.
For instance, in longitudinal studies, the disease status (i.e., diseased or not diseased) is measured over time on the same person.
Similarly, in a panel study of income dynamics, the employment status information may be collected over time from the same survey participant.
The multivariate probit model \citep{Ashford1970, Chib1998} is well-known for this type of data, as it describes the dependence between binary variables by a latent Gaussian link, which allows for flexible modelling of dependence, has straightforward interpretation of the parameters and is easily amenable to Bayesian inference.

A symmetric link, however, does not always provide the best fit to a given dataset; see \cite{Chen1999}, \cite{Kim2008} for some examples.
In this case, the link might be misspecified, yielding substantial bias in the mean response estimates \citep{Czado1992}.
\cite{Chen1999} used the rate at which the probability of a given binary response variable approaches 0 and 1 to guide the selection of a symmetric or asymmetric link.
In other words, if the binary response data are highly imbalanced, the rate of the probability of the random variable approaching 0 is typically very different from the one approaching 1, so that an asymmetric link might be preferred than a symmetric link.
Motivated by this observation, a variety of flexible asymmetric link models have been proposed for univariate binary response data.
Surprisingly, to the best of our knowledge, no multivariate asymmetric link models have previously been proposed in the literature for correlated binary responses.
The purpose of this paper is to fill this important gap by proposing a flexible multivariate skew-elliptical link model for correlated binary responses, which includes the multivariate probit model as a special case and allows for fast and accurate Bayesian inference; see Section 2.2 for details on the multivariate skew-elliptical distribution, used in our model as a key building block.

\cite{Durante2019} has proved that for the univariate probit model with Gaussian priors, the posterior of the regression coefficients belongs to the class of unified skew-normal distributions \citep{AreAzza2006}.
This result has led to a similar result for the multinomial probit model \citep{FasanoDurante2020} and a closed-form predictive probability in probit models with Gaussian process priors \citep{CaoDurante2020}.
In this paper we also consider Bayesian inference for our new multivariate model and prove that the posterior of the regression coefficients belongs to the unified skew-elliptical family.
The closed-form and tractable posterior for the regression coefficients facilitates inference by using an algorithm which does not rely on data-augmentation, and thus avoids the convergence and mixing issues of the classical data-augmentation algorithms for probit models; see \cite{Johndrow2019} for a discussion of this issue.

We illustrate the new methodology by application to COVID-19 pandemic data from three different counties of the state of California, USA.
By jointly modeling the occurrences of extreme spikes in weekly new infected cases using our new model, we can estimate the underlying spatial dependence structure, which might provide helpful quantitative insights into the transmission modes of the virus and help authorities mitigate its spread.
Furthermore, our model has additional skewness parameters compared to the multivariate probit model, which improves its flexibility and makes it more appropriate for modeling our highly imbalanced dataset.

\section{Preliminaries: Skew-Elliptical and Unified Skew-Elliptical Distributions}
\subsection{The Skew-Elliptical Distribution}
The skew-elliptical distribution, originally proposed by \cite{Azzalini1999}, was formulated by multiplying an elliptical density with a skewing function.
\cite{Branco2001} proposed a new formulation of the skew-elliptical distribution by means of a conditioning mechanism.
The close relationship between these two formulations is established in \cite{Azzalini2003}.
Thanks to the construction in terms of a conditioning mechanism, the formulation in \cite{Branco2001} has led to many attractive properties of this class of distribution, such as existence of stochastic representation and closeness under marginalization and affine transformation.
\cite{Fang2003} considered a slightly wider class of distributions than \cite{Branco2001} by adding an extra truncation parameter, which was later called the extended skew-elliptical distribution in \cite{AreGenton2010}, and showed that this new distribution is closed under marginalization, affine transformation and also conditioning.

Here we adopt a slightly different parametrization than \cite{Fang2003} with the truncation parameter taken as $0$ and consider only skew-elliptical random vectors which possess densities.
Let $g^{(d+1)}$ be a density generator for a $(d+1)$-dimensional elliptical random vector that satisfies
\[
\int_0^\infty r^{(d+1)/2-1} g^{(d+1)}(r) {\rm d}r = \Gamma((d+1)/2)\pi^{-(d+1)/2},
\]
then a $d$-dimensional random vector $\bm{X}$ has a skew-elliptical distribution with location parameter vector $\bm{\xi}\in\mathR^d$, positive-definite scale matrix $\Sigma\in\mathR^{d\times d}$, skewness parameter vector $\bm{\alpha}\in\mathR^d$, and density generator $g^{(d+1)}$, if its density function is 
\begin{equation}\label{SEpdf}
    f_{\bm{X}}(\bm{x}) =  \frac{2}{\sqrt{|\Sigma|}}g^{(d)}\big((\bm{x}-\bm{\xi})^\top\Sigma^{-1}(\bm{x}-\bm{\xi})\big) G\big(\bm{\alpha}^\top\sigma^{-1}(\bm{x}-\bm{\xi});g_{q(\bm{x})}\big), \quad \bm{x}\in\mathR^d,
\end{equation}
where $\sigma={\rm diag}(\Sigma)^{1/2}\in\mathR^{d\times d}$, $q(\bm{x})=(\bm{x}-\bm{\xi})^\top\Sigma^{-1}(\bm{x}-\bm{\xi})$, $g^{(d)}$ is the $d$-dimensional marginal density generator induced by $g^{(d+1)}$, and $G(\cdot;g_{q(\bm{x})})$ is the cumulative distribution function of the univariate elliptical distribution with mean $0$, scale $1$, and conditional density generator $g_{q(\bm{x})}(s)=g^{(d+1)}(s+q(\bm{x}))/g^{(d)}(q(\bm{x}))$.
We write $\bm{X}\sim \mathcal{SE}_d(\bm{\xi},\Sigma,\bm{\alpha},g^{(d+1)})$.
When $\bm{\alpha}=\bm{0}$, the skew-elliptical distribution reduces to an elliptical distribution.

The skew-elliptical distribution has two stochastic representations, i.e., a convolution-type representation and a conditioning-type representation; see Equations (10) and (19) in \cite{Fang2003}. 
The former is useful for random sampling, and the latter allows us to express its cumulative distribution function in the following simple form
\begin{equation}
    F(\bm{x}) = 2G_{d+1}(\bm{x}_*-\bm{\xi}_*; \Sigma_*,g^{(d+1)}), \label{SEcdf}
\end{equation}
with $\bm{x}_* = (0, \bm{x}^\top)^\top$, $\bm{\xi}_* = (0, \bm{\xi}^\top)^\top$ and 
\[
\Sigma_* = \begin{pmatrix}
1 & -\bm{\delta}^\top \sigma \\
-\sigma\bm{\delta}  & \Sigma
\end{pmatrix},
\]
where $\sigma={\rm diag}(\Sigma)^{1/2}\in\mathR^{d\times d}$, $\bm{\delta}=(1+\bm{\alpha}^\top\Bar{\Sigma}\bm{\alpha})^{-1/2}\Bar{\Sigma}\bm{\alpha}$ with $\Bar{\Sigma}$ being the correlation matrix corresponding to $\Sigma$, i.e., $\Sigma=\sigma\Bar{\Sigma}\sigma$, and  $G_{d+1}(\bm{x}_*-\bm{\xi}_*; \Sigma_*,g^{(d+1)})$ denotes the cumulative distribution function of the $(d+1)$-variate elliptical distribution with location vector $\bm{\xi}_*\in \mathR^{d+1}$, positive-definite covariance matrix $\Sigma_*\in\mathR^{(d+1)\times (d+1)}$, and density generator $g^{(d+1)}$.
The positive definiteness of $\Sigma_*$ implies that the admissible parameters of $(\Sigma,\alpha)$ are such that the matrix $\bar\Sigma - \bm{\delta}\bm{\delta}^\top$ is positive definite.

A prominent subclass of the skew-elliptical distribution is the skew-normal distribution \citep{Azzalini1985,Azzalini1996}.
Specifically, when $g^{(d+1)}$ is the $(d+1)$-variate normal density generator, the density function of $\bm{X}$ is
\begin{equation*}\label{SNpdf}
f(\bm{x})= 2\phi_d(\bm{x}-\bm{\xi};\Sigma)\Phi\big(\bm{\alpha}^\top \sigma^{-1}(\bm{x}-\bm{\xi}) \big), \quad \bm{x}\in\mathR^d,
\end{equation*}
where $\phi_d(\bm{x}-\bm{\xi};\Sigma)$ denotes the probability density function of the $d$-variate Gaussian distribution with mean vector $\bm{\xi}$ and covariance matrix $\Sigma$, and $\Phi(\cdot)$ is the cumulative distribution function of the standard normal distribution.
We denote this distribution as $\bm{X} \sim \mathcal{SN}_d(\bm{\xi},\Sigma,\bm{\alpha})$.
When $\bm{\alpha} = \bm{0}$, it reduces to the $d$-dimensional normal distribution, $\mathcal{N}_d(\bm{\xi},\Sigma)$, and when $d=1$ it coincides with the univariate skew-normal distribution \citep{Azzalini1985}.

When $g^{(d+1)}$ is the $d$-variate Student's $t$ density generator with $\nu$ degrees of freedom, we get another important subclass of the skew-elliptical distribution, i.e., the skew-$t$ distribution \citep{Branco2001,Azzalini2003,Gupta2003}.
Its density has the following form
\begin{equation*}\label{STpdf}
    f_{\bm{X}}(\bm{x}) =  2t_d(\bm{x}-\bm{\xi};\Sigma,\nu) T\Big\{\bm{\alpha}^\top \sigma^{-1} (\bm{x}-\bm{\xi}) \Big(\frac{\nu+p}{q(\bm{x})+\nu}\Big)^{1/2}; \nu+d \Big\},\quad \bm{x}\in \mathbb{R}^d,
\end{equation*}
where $t_d(\bm{x}-\bm{\xi};\Sigma,\nu)$ denotes the probability density function of the $d$-variate t distribution with location vector $\bm{\xi}$, scale matrix $\Sigma$, and degrees of freedom $\nu$, $T(\cdot; \nu+d)$ denotes the univariate $t$ distribution function with degrees of freedom $\nu+d$.
We write $\bm{X}\sim \mathcal{ST}_d(\bm{\xi},\Sigma,\bm{\alpha},\nu)$.
When $\bm{\alpha} = \bm{0}$, it reduces to the $d$-dimensional Student's $t$ distribution, and when $\nu\rightarrow\infty$, it tends to the $d$-dimensional skew-normal distribution.

\subsection{The Unified Skew-Elliptical Distribution}
An extension of the skew-elliptical distribution is the unified skew elliptical distribution \citep{AreGenton2010}, which aims to gain more flexibility by unifying various skew-elliptical families under the same model.
Specifically, a $d$-dimensional random vector $\bm{X}$ has a unified skew elliptical distribution, denoted by $\bm{X}\sim \mathcal{SUE}_{d,m}(\bm{\xi},\Sigma,\Lambda,\bm{\tau},\Gamma,g^{(d+m)})$, if its density function is
\begin{equation*}\label{SUEpdf}
 f(\bm{x}) = \frac{g^{(d)}\big((\bm{x}-\bm{\xi})^\top\Sigma^{-1}(\bm{x}-\bm{\xi})\big)}{\sqrt{|\Sigma|}} \frac{G_m \big(\bm{\tau}+\Lambda\sigma^{-1}(\bm{x}-\bm{\xi});\Gamma, g^{(m)}_{q(\bm{x})} \big)}{G_m(\bm{\tau};\Gamma +\Lambda\Bar{\Sigma}\Lambda^\top, g^{(m)})}, \quad \bm{x}\in \mathbb{R}^d,
\end{equation*}
where $q(\bm{x})=(\bm{x}-\bm{\xi})^\top\Sigma^{-1}(\bm{x}-\bm{\xi})$, $g^{(d+m)}$ is a $(d+m)$-variate elliptical density generator, $g^{(d)}$ and $g^{(m)}$ are its $d$-variate and $m$-variate marginal density generators, respectively, $g^{(m)}_{q(\bm{x})}(s)=g^{(d+m)}\{s+q(\bm{x})\}/g^{(d)}\{q(\bm{x})\}$,
$\bm{\xi}\in\mathR^d$ is a location parameter vector, $\bm{\tau}\in\mathR^d$ introduces additional flexibility to capture skewness, $\Gamma \in\mathR^{m\times m}$ is a correlation matrix, and $\Lambda\in\mathR^{m\times d}$ encompasses the main effect on the skewness.
When $m=1$, it reduces to the extended skew-elliptical distribution \citep{Fang2003}, and if we further have $\bm{\tau}=\bm{0}$, it reduces to the skew-elliptical distribution (\ref{SEpdf}).

Similar to the skew-elliptical distribution, the unified skew elliptical distribution also has two special subclasses, i.e., the unified skew-normal distribution \citep{AreAzza2006} and the unified skew-$t$ distribution.
When $g^{(d+m)}$ is the $(d+m)$-variate normal density generator, we get the unified skew-normal distribution with density
\begin{equation}\label{SUNpdf}
 f(\bm{x}) = \phi_d(\bm{x}-\bm{\xi};\Sigma) \frac{\Phi_m \big(\bm{\tau}+\Lambda\sigma^{-1}(\bm{x}-\bm{\xi});\Gamma\big)}{\Phi_m(\bm{\tau};\Gamma +\Lambda\Bar{\Sigma}\Lambda^\top)}, \quad \bm{x}\in \mathbb{R}^d,
\end{equation}
where $\Phi_m(\cdot;\Gamma)$ denotes the centered $m$-dimensional normal distribution function with covariance matrix $\Gamma$.
We write $\bm{X}\sim \mathcal{SUN}_{d,m}(\bm{\xi},\Sigma,\Lambda,\bm{\tau},\Gamma)$.
The definition (\ref{SUNpdf}) is equivalent to the one in \cite{AreAzza2006} with a different parametrization.
When $g^{(d+m)}$ is the $(d+m)$-variate Student's $t$ density generator with $\nu$ degrees of freedom, we get the unified skew-$t$ distribution with density
\begin{equation*}\label{SUTpdf}
 f(\bm{x}) = t_d(\bm{x}-\bm{\xi};\Sigma,\nu) \frac{T_m \Big(\big(\bm{\tau}+\Lambda\sigma^{-1}(\bm{x}-\bm{\xi})\big)\big(\frac{\nu+p}{q(\bm{x})+\nu}\big)^{1/2};\Gamma,\nu+d \Big)}{T_m(\bm{\tau};\Gamma +\Lambda\Bar{\Sigma}\Lambda^\top,\nu)}, \quad \bm{x}\in \mathbb{R}^d,
\end{equation*}
where $T_m(\cdot;\Gamma,\nu+d)$ denotes the centered $m$-dimensional Student's t distribution function with dispersion matrix $\Gamma$ and degrees of freedom $\nu+d$.
We write $\bm{X}\sim \mathcal{SUT}_{d,m}(\bm{\xi},\Sigma,\Lambda,\nu,\bm{\tau},\Gamma)$.


\section{Posterior Inference for the Skew-Elliptical Link Model}
\subsection{The Skew-Elliptical Link Model}
As discussed in Section \ref{intro}, when modeling correlated binary data, the multivariate probit model uses a Gaussian link to capture dependence at the ``latent level".
A symmetric link, however, does not always provide the best fit to a given dataset, in particular for binary response data that are highly imbalanced.

In this section we extend the Gaussian link to the multivariate skew-elliptical link, which includes the skew-normal and skew-$t$ links as special cases.
Specifically, let $Y_{ij}$ denote a binary $0/1$ response on the $i$th observation of the $j$th variable and denote by $\bm{Y}_i=(Y_{i1},\dots,Y_{iM})^\top$ the collection of the $i$th observation on all $M$ variables for $i=1,\dots,n$.
Let $\bm{Y}_i^*=(Y_{i1}^*,\dots,Y_{iM}^*)^\top$ be a vector of latent variables capturing dependence among the components of $\bm{Y}_i$, $\bm{\beta}\in \mathR^p$ be a vector of regression coefficients, $X_i = (\bm{x}_{i1},\dots,\bm{x}_{iM})^\top \in \mathR^{M\times p}$ be the data matrix for the $i$th observation, and denote $X=(X_1^\top,\dots,X_n^\top)^\top \in\mathR^{nM\times p}$.
Then the multivariate skew-elliptical link model can be expressed as
\begin{align}\label{SEmodel}
    Y_{ij}  &= \left\{
    \begin{array}{cc}
    1, &\text{ if } Y_{ij}^* > 0, \\
    0, &\text{ otherwise}, 
    \end{array}
    \right. \nonumber \\
    \bm{Y}^* &=(\bm{Y}_1^{*\top},\dots,\bm{Y}_n^{*\top})^\top= X \bm{\beta} + \bm{\varepsilon}, \\
    \left(
    \begin{array}{c}
    \bm{\beta} \\
    \bm{\varepsilon} \\
    \end{array}
    \right) \Bigg| \Sigma,\bm{\alpha},g^{(p+nM+1)} &\sim \mathcal{SE}_{p+nM}\left(\left(
                         \begin{array}{c}
                             \bm{\mu} \\
                             0 \\
                         \end{array}
                         \right),\left(
                                 \begin{array}{cc}
                                    \Omega & 0 \\
                                    0 & \mathrm{I}_n\otimes\Sigma \\
                                \end{array}
                                 \right), \left(
                                          \begin{array}{c}
                                               \bm{0}  \\
                                               \bm{\alpha}
                                          \end{array}
                                          \right),g^{(p+nM+1)}
    \right) \nonumber,
\end{align}
where $\bm{\mu}\in\mathR^p$ is a location parameter vector, $\Omega\in\mathR^{p\times p}$ is a positive-definite covariance matrix, $\mathrm{I}_n\in\mathR^{n\times n}$ is the identity matrix, $\otimes$ denotes the Kronecker product, $\Sigma\in\mathR^{M\times M}$ is a positive definite covariance matrix, $\bm{\alpha}\in\mathR^{nM}$ is a skewness parameter vector, and $g^{(p+nM+1)}$ is a $(p+nM+1)$-variate elliptical density generator.

The multivariate probit model \citep{Chib1998} assumes that the covariates are not shared by the $M$ variables $Y_{i1},\dots,Y_{iM}$.
In that case, $\bm{\beta}$ can be understood as $\bm{\beta}=(\bm{\beta}_1^\top,\dots,\bm{\beta}_M^\top)^\top$, where $\bm{\beta}_j\in\mathR^{p_j}$ with $\sum_{j=1}^M p_j = p$ is the regression coefficients for the $j$-th variable $Y_{1j},\dots,Y_{nj}$, and $\bm{x}_{ij}$ is understood as the vector $\bm{x}_{ij}=(\bm{x}_{ij1}^\top,\dots,\bm{x}_{ijM}^\top)^\top$ with $\bm{x}_{ijk}=\bm{0}$ for $k\neq j$, so that $\bm{x}_{ij}^\top \bm{\beta} = \bm{x}_{ijj}^\top \bm{\beta}_j$.
This notation of expanded vector $\bm{\beta}$ and $\bm{x}_{ij}$ simplifies the expression of our model (\ref{SEmodel}).

To better understand the assumption on the joint distribution of $\bm{\beta}$ and $\bm{\varepsilon}$ in the model (\ref{SEmodel}), we express it in a different way.
Using Proposition 2 in \cite{Fang2003}, an equivalent assumption is that
\begin{align}
    \bm{\beta} \mid g^{(p+nM+1)} &\sim \mathcal{SE}_{p}(\bm{\mu},\Omega,\bm{0},g^{(p+1)}) \label{betaprior},\\
    \bm{\varepsilon}\mid \bm{\beta},\Sigma,\bm{\alpha},g^{(p+nM+1)} &\sim \mathcal{SE}_{nM}(\bm{0},\mathrm{I}_n\otimes\Sigma,\bm{\alpha},g^{(nM+1)}_{q(\bm{\beta})}), \label{modelassumption}
\end{align}
where $q(\bm{\beta})=(\bm{\beta}-\bm{\mu})^\top\Omega^{-1}(\bm{\beta}-\bm{\mu})$, $g^{(nM+1)}_{q(\bm{\beta})}(s) = g^{(p+nM+1)}(s+q(\bm{\beta}))/g^{(p)}(q(\bm{\beta}))$, and $g^{(p)}$, $g^{(p+1)}$ are the $p$- and $(p+1)$-variate marginal density generators induced by the same generator $g^{(p+nM+1)}$, respectively.
Assumption (\ref{betaprior}) may be understood as the prior for $\bm{\beta}$, while (\ref{modelassumption}) is the distributional assumption for the latent data vector $\bm{Y}^*$.
From (\ref{modelassumption}) we observe that $\beta$ and $\varepsilon$ are dependent, but they are conditionally independent given $q(\beta)$.
This weak dependence between them is broken when $g^{(p+nM+1)}$ is the normal density generator.
Specifically, when $g^{(p+nM+1)}$ is the $(p+nM+1)$-variate normal density generator, (\ref{betaprior}) becomes the typical Gaussian prior, $\mathcal{N}_p(\bm{\mu},\Omega)$, and (\ref{modelassumption}) becomes 
$\bm{\varepsilon}\mid \Sigma,\bm{\alpha} \sim \mathcal{SN}_{nM}(\bm{0},\mathrm{I}_n\otimes\Sigma,\bm{\alpha})$, which is independent of $\bm{\beta}$ conditional on $\Sigma$ and $\bm{\alpha}$.
If we further have $\bm{\alpha}=\bm{0}$, then (\ref{modelassumption}) becomes $\bm{\varepsilon}\mid \Sigma \sim \mathcal{N}_{nM}(\bm{0},\mathrm{I}_n\otimes\Sigma)$ and model (\ref{SEmodel}) reduces to the well-known multivariate probit model \citep{Ashford1970,Chib1998} with a typical Gaussian prior for $\bm{\beta}$.
By assuming a joint distribution for $\bm{\beta}$ and $\bm{\varepsilon}$, we can gain two major advantages.
The first is that we are able to account not only for the dependence between $\bm{\beta}$ and $\bm{\varepsilon}$, but also for the dependence between the different observations $\bm{Y}_i,i=1,\dots,n$.
The second is that this assumption allows us to get a tractable posterior for $\bm{\beta}$; see Section \ref{regsampling} for more details.

From (\ref{modelassumption}) we know that the admissible parameters of $(\Sigma,\alpha)$ are these such that the matrix $\mathrm{I}_n\otimes\bar\Sigma - \bm{\delta}\bm{\delta}^\top$ is positive definite, where $\bar\Sigma$ is the correlation matrix corresponding to $\Sigma$ and $\bm{\delta}=\big(1+\bm{\alpha}^\top(\mathrm{I}_n\otimes\bar\Sigma)\bm{\alpha}\big)^{-1/2}(\mathrm{I}_n\otimes\bar\Sigma)\bm{\alpha}$.
From (\ref{SEmodel}), the joint probability mass function of $\bm{Y}=(\bm{Y}_1^\top,\dots,\bm{Y}_n^\top)^\top=\bm{y}$, given all the parameters and the data matrix $X$, is
\begin{equation}\label{jointprob}
p(\bm{y} \mid \bm{\beta},\Sigma,\bm{\alpha},g^{(p+nM+1)})
=\int_{A_{nM}}\cdots\int_{A_{11}} \frac{2}{|\mathrm{I}_n\otimes\Sigma|^{1/2}}g^{q(\bm{\beta}),nM}\big(\bm{t}^\top(\mathrm{I}_n\otimes\Sigma^{-1})\bm{t}\big) G(\bm{\alpha}^\top\bm{t};g^{q(\bm{\beta})}_{q(\bm{t})})  {\rm d}\bm{t},
\end{equation}
where $q(\bm{t})=\bm{t}^\top(\mathrm{I}_n\otimes\Sigma^{-1})\bm{t}$, $g^{q(\bm{\beta})}_{q(\bm{t})}(s) = g^{(nM+1)}_{q(\bm{\beta})}(s+q(\bm{t}))/g^{q(\bm{\beta}),nM}(q(\bm{t}))$, $g^{q(\bm{\beta}),nM}$ is the $nM$-variate marginal density generator induced by $g^{(nM+1)}_{q(\bm{\beta})}$, and $A_{ij}, i=1,\dots,n, j=1,\dots,M$ is the interval
\[
A_{ij} = \left\{
\begin{array}{cc}
(-\bm{x}_{ij}^\top \bm{\beta},\infty), &\text{ if } y_{ij}=1, \\
(-\infty,\bm{x}_{ij}^\top \bm{\beta}], &\text{ if } y_{ij}=0.
\end{array}
\right.
\]
Although the joint probability (\ref{jointprob}) involves multidimensional integration over a constrained space, we show in the following section that it can be substantially simplified.

\subsection{Unified Skew Elliptical Posterior for the Regression Coefficients}\label{regsampling}
In this section we prove that for the multivariate skew-elliptical link model (\ref{SEmodel}), the regression coefficients parameter $\bm{\beta}$ has a unified skew elliptical posterior.
To prove this result, we first simplify the joint probability mass function $p(\bm{y} \mid \bm{\beta},\Sigma,\bm{\alpha},g^{(p+nM+1)})$ of the observed data in the following lemma. 
\begin{lemma}\label{lemma1}
The joint probability mass function $p(\bm{y} \mid \bm{\beta},\Sigma,\bm{\alpha},g^{(p+nM+1)})$ based on (\ref{SEmodel}) can be simplified to
\[
p(\bm{y} \mid \bm{\beta},\Sigma,\bm{\alpha},g^{(p+nM+1)}) = 2G_{nM+1}(D_*\bm{\beta}; \Sigma_*,g^{(nM+1)}_{q(\bm{\beta})}), 
\]
where $D = {\rm diag}(2\bm{y}-\bm{1}_{nM})\in\mathR^{nM\times nM}$ with $\bm{1}_{nM} \in \mathR^{nM}$ being the vector of $1$s, $D_*=\big(\bm{0}_p,(DX)^\top\big)^\top\in\mathR^{(nM+1)\times p}$, 
$\bm{0}_p\in\mathR^p$ is a vector of $0$s, and 
\[
\Sigma_* = \begin{pmatrix}
1 & -\bm{\delta}^\top D(\mathrm{I}_n\otimes\sigma) \\
-(\mathrm{I}_n\otimes\sigma)D\bm{\delta} & D(\mathrm{I}_n\otimes\Sigma)D
\end{pmatrix}\in\mathR^{(nM+1)\times (nM+1)}
\]
with $\bm{\delta}\in\mathR^{nM}, \bm{\delta}=\big(1+\bm{\alpha}^\top(\mathrm{I}_n\otimes\bar\Sigma)\bm{\alpha}\big)^{-1/2}(\mathrm{I}_n\otimes\bar\Sigma)\bm{\alpha}$, $\sigma={\rm diag}(\Sigma)^{1/2}\in\mathR^{d\times d}$ and $\bar\Sigma$ being the correlation matrix corresponding to $\Sigma$, i.e., $\Sigma=\sigma\bar\Sigma\sigma$.
\end{lemma}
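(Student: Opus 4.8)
The plan is to recognize the constrained integral in (\ref{jointprob}) as an orthant probability of the skew-elliptical error vector, and then to collapse it onto the skew-elliptical distribution function via (\ref{SEcdf}). By (\ref{modelassumption}) the integrand is precisely the density of $\bm\varepsilon\mid\bm\beta\sim\mathcal{SE}_{nM}(\bm 0,\mathrm{I}_n\otimes\Sigma,\bm\alpha,g^{(nM+1)}_{q(\bm\beta)})$, so that $p(\bm y\mid\cdots)=\mathP(\bm\varepsilon\in A_{11}\times\cdots\times A_{nM})$. First I would translate the event $\{Y_{ij}=y_{ij}\ \forall i,j\}$ back to the latent scale: since $Y_{ij}=\mathds{1}(Y^*_{ij}>0)$ and $D=\mathrm{diag}(2\bm y-\bm 1_{nM})$ has entries $\pm 1$, this event is $\{D\bm Y^*>\bm 0\}=\{DX\bm\beta+D\bm\varepsilon>\bm 0\}$, equivalently $\{-D\bm\varepsilon\le DX\bm\beta\}$. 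Writing $\bm W=-D\bm\varepsilon$, the target probability is therefore the distribution function $F_{\bm W}(DX\bm\beta)$.

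Next I would determine the law of $\bm W$. Because the skew-elliptical family is closed under affine transformations (the very property from \cite{Fang2003} that underlies the passage from (\ref{SEmodel}) to (\ref{betaprior})--(\ref{modelassumption})) and $-D$ is a nonsingular diagonal matrix, one has $\bm W\sim\mathcal{SE}_{nM}(\bm 0,D(\mathrm{I}_n\otimes\Sigma)D,\bm\alpha_{\bm W},g^{(nM+1)}_{q(\bm\beta)})$, the density generator being left intact by the linear map. The matrix $D$ is an involution with $D=D^\top$ that commutes with the diagonal matrix $\mathrm{I}_n\otimes\sigma$, so conjugation preserves the diagonal and $\sigma_{\bm W}=\mathrm{diag}\{D(\mathrm{I}_n\otimes\Sigma)D\}^{1/2}=\mathrm{I}_n\otimes\sigma$. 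Tracking the skewness through the conditioning representation that produces (\ref{SEcdf}), the $\bm\delta$-vector of $\bm W$ is $\pm D\bm\delta$ with $\bm\delta=\big(1+\bm\alpha^\top(\mathrm{I}_n\otimes\bar\Sigma)\bm\alpha\big)^{-1/2}(\mathrm{I}_n\otimes\bar\Sigma)\bm\alpha$, the sign being fixed by the reflection $\bm\varepsilon\mapsto-D\bm\varepsilon$.

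Finally I would apply the skew-elliptical distribution-function formula (\ref{SEcdf}) to $\bm W$ at $\bm x=DX\bm\beta$. Since $\bm\xi_{\bm W}=\bm 0$, the augmented argument is $\bm x_*-\bm\xi_{\bm W,*}=(0,(DX\bm\beta)^\top)^\top=D_*\bm\beta$, exactly the zero-padded vector in the statement, and the prefactor $2$ is inherited directly from (\ref{SEcdf}). Substituting $\sigma_{\bm W}=\mathrm{I}_n\otimes\sigma$, the scale $D(\mathrm{I}_n\otimes\Sigma)D$, and the transformed $\bm\delta$-vector into the block matrix of (\ref{SEcdf}) produces the $(nM+1)\times(nM+1)$ matrix $\Sigma_*$ displayed in the lemma, while the conditional generator $g^{(nM+1)}_{q(\bm\beta)}$ carries over unchanged. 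I expect the main obstacle to be the bookkeeping in this last step: one must transport the skewness parameter correctly through the reflection $-D$ and keep track of the induced sign in the off-diagonal block $-\bm\delta^\top D(\mathrm{I}_n\otimes\sigma)$, because an elliptical orthant probability is genuinely sensitive to the sign of the correlation between the selection coordinate and the remaining components. Everything else---the reduction to an orthant probability, the invariance of the generator, and the zero-padding that yields $D_*\bm\beta$---is routine once the affine-transformation law for $\mathcal{SE}$ is in hand.
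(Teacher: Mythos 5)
Your route is the same as the paper's: both reduce $\mathP(\bm{Y}=\bm{y})$ to an orthant/cdf statement via the sign matrix $D$, both invoke closure of the skew-elliptical family under affine maps (Proposition 1 of Fang, 2003) for the reflection by $-D$, and both then read the answer off the cdf formula (\ref{SEcdf}); the only cosmetic difference is that you evaluate $F_{-D\bm{\varepsilon}}(DX\bm{\beta})$ while the paper absorbs the shift into $\bm{W}=-D\bm{\varepsilon}-DX\bm{\beta}$ and evaluates its cdf at $\bm{0}$. However, you leave the one non-routine step unresolved: you write that the $\bm{\delta}$-vector of $-D\bm{\varepsilon}$ is ``$\pm D\bm{\delta}$, the sign being fixed by the reflection,'' without ever fixing it. Since everything else is, as you say, routine bookkeeping, that sign \emph{is} the content of the lemma, and as written your proposal is incomplete at exactly the decisive point.

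Settling it is a one-line density computation, and it is worth doing because it exposes a discrepancy with the paper. From (\ref{SEpdf}), $f_{-D\bm{\varepsilon}}(\bm{v})=f_{\bm{\varepsilon}}(-D\bm{v})$, and since $D$ commutes with $\mathrm{I}_n\otimes\sigma$ the skewing argument becomes $\bm{\alpha}^\top(\mathrm{I}_n\otimes\sigma)^{-1}(-D\bm{v})=(-D\bm{\alpha})^\top(\mathrm{I}_n\otimes\sigma)^{-1}\bm{v}$: the skewness of $-D\bm{\varepsilon}$ is $-D\bm{\alpha}$, not $D\bm{\alpha}$ as asserted in the paper's proof. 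Consequently $\bm{\delta}_{\bm{W}}=-D\bm{\delta}$, and (\ref{SEcdf}) then delivers off-diagonal blocks $+\bm{\delta}^\top D(\mathrm{I}_n\otimes\sigma)$, the opposite sign to the $\Sigma_*$ displayed in the lemma. A univariate check confirms the corrected sign: take $n=M=p=1$, $X=1$, $y=1$ (so $D=1$), and $\varepsilon\sim\mathcal{SN}_1(0,1,\alpha)$ with $\alpha\to\infty$, i.e.\ $\varepsilon\equdist|Z|$ and $\delta=1$; for $\beta>0$ the true value is $\mathP(\beta+\varepsilon>0)=1$, and $2\Phi_2\big((0,\beta)^\top\big)$ with correlation $+\delta=1$ gives $2\Phi\{\min(0,\beta)\}=1$, whereas the lemma's correlation $-\delta=-1$ gives $1-2\Phi(-\beta)<1$. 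So your instinct that the orthant probability is ``genuinely sensitive'' to this sign was exactly right, but a proof must resolve it rather than flag it---and resolving it under the paper's own parametrization (\ref{SEpdf})--(\ref{SEcdf}) shows that $\Sigma_*$ should carry $+\bm{\delta}^\top D(\mathrm{I}_n\otimes\sigma)$ (equivalently, replace $\bm{\delta}$ by $-\bm{\delta}$), i.e.\ the sign slip you feared is present in the target statement itself.
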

\begin{proof}
Since a diagonal matrix ${\rm diag}(\bm{x})$ with $\bm{x}\in\{-1,1\}^{nM}$ has the property
\[
{\rm diag}(\bm{x}) \bm{x} = \bm{1}_{nM}, \text{ and } \big({\rm diag}(\bm{x})\big)^{-1} = {\rm diag}(\bm{x}),
\]
we have
\begin{align*}
    p(\bm{y} \mid \bm{\beta},\Sigma,\bm{\alpha},g^{(p+nM+1)}) &= \mathP(\bm{Y}=\bm{y} \mid \bm{\beta},\Sigma,\bm{\alpha},g^{(p+nM+1)}) \\
    &= \mathP(2\bm{Y}-\bm{1}_{nM}=2\bm{y}-\bm{1}_{nM} \mid \bm{\beta},\Sigma,\bm{\alpha},g^{(p+nM+1)}) \\
    &= \mathP\big(D(2\bm{Y}-\bm{1}_{nM})=\bm{1}_{nM} \mid \bm{\beta},\Sigma,\bm{\alpha},g^{(p+nM+1)}\big) \\
    &= \mathP\big(D\bm{Y}^* > \bm{0} \mid \bm{\beta},\Sigma,\bm{\alpha},g^{(p+nM+1)}\big) \\
    &= \mathP\big(-D\bm{\varepsilon}-D X\bm{\beta} < \bm{0} \mid \bm{\beta},\Sigma,\bm{\alpha},g^{(p+nM+1)}\big).
\end{align*}
By (\ref{modelassumption}), $\bm{\varepsilon}\mid \bm{\beta},\Sigma,\bm{\alpha},g^{(p+nM+1)} \sim \mathcal{SE}_{nM}(\bm{0},\mathrm{I}_n\otimes\Sigma,\bm{\alpha},g^{(nM+1)}_{q(\bm{\beta})})$. 
Using Proposition 1 in \cite{Fang2003}, we know that
\[
(-D\bm{\varepsilon}-D X\bm{\beta}) \mid \bm{\beta},\Sigma,\bm{\alpha},g^{(p+nM+1)} \sim \mathcal{SE}_{nM}(-D X\bm{\beta},D(\mathrm{I}_n\otimes\Sigma)D,D\bm{\alpha},g^{(nM+1)}_{q(\bm{\beta})}).
\]
Using (\ref{SEcdf}), we finally get
\[
p(\bm{y} \mid \bm{\beta},\Sigma,\bm{\alpha},g^{(p+nM+1)}) = 2G_{nM+1}(D_*\bm{\beta}; \Sigma_*,g^{(nM+1)}_{q(\bm{\beta})}).
\]
\end{proof}

Now we are ready to present our main result that the posterior distribution of $\bm{\beta}$ coincides with a unified skew elliptical distribution.

\begin{theorem}\label{theorem1}
Let $\bm{y}=(\bm{y}_1^\top,\dots,\bm{y}_n^\top)^\top$ be observations from the multivariate skew-elliptical link model (\ref{SEmodel}) and $X=(X_1^\top,\dots,X_n^\top)^\top$ be the corresponding data matrix.
Then
\[
(\bm{\beta} \mid \bm{y},\Sigma,\bm{\alpha},g^{(p+nM+1)}) \sim \mathcal{SUE}_{p,nM+1}(\bm{\mu}_{post},\Omega_{post},\Lambda_{post},\bm{\tau}_{post},\Gamma_{post},g^{(p+nM+1)}),
\]
with posterior parameters
\begin{align*}
    \bm{\mu}_{post} = \bm{\mu}, \; \Omega_{post}=\Omega, \; \Lambda_{post}=\sigma_*^{-1}D_*\omega, \; \bm{\tau}_{post}=\sigma_*^{-1}D_*\bm{\mu}, \; \Gamma_{post}=\bar\Sigma_*,
\end{align*}
where $D_*\in\mathR^{(nM+1)\times p}$ and $\Sigma_*\in\mathR^{(nM+1)\times (nM+1)}$ are the matrices defined in Lemma \ref{lemma1}, $\sigma_*={\rm diag}(\Sigma_*)^{1/2}\in\mathR^{(nM+1)\times (nM+1)}$, $\bar\Sigma_*$ is the correlation matrix corresponding to $\Sigma_*$, i.e., $\Sigma_*=\sigma_*\bar\Sigma_*\sigma_*$,
and $\omega={\rm diag}(\Omega)^{1/2}\in\mathR^{p\times p}$.
\end{theorem}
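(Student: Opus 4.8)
The plan is to combine Bayes' theorem with the simplified likelihood from Lemma~\ref{lemma1} and then read off the posterior by matching it term-by-term against the $\mathcal{SUE}$ density of Section 2.2. First I would record the prior density explicitly. By (\ref{betaprior}) the prior is $\mathcal{SE}_p(\bm{\mu},\Omega,\bm{0},g^{(p+1)})$, and since its skewness vector vanishes, (\ref{SEpdf}) collapses to a purely elliptical density: evaluating at $\bm{\alpha}=\bm{0}$ and using $G(0;\cdot)=1/2$ leaves the prior density $g^{(p)}(q(\bm{\beta}))/\sqrt{|\Omega|}$, where $q(\bm{\beta})=(\bm{\beta}-\bm{\mu})^\top\Omega^{-1}(\bm{\beta}-\bm{\mu})$ and $g^{(p)}$ is the marginal generator induced by $g^{(p+nM+1)}$.

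Next I would form the unnormalized posterior as the product of Lemma~\ref{lemma1} and this prior,
\[
p(\bm{\beta}\mid\bm{y},\Sigma,\bm{\alpha},g^{(p+nM+1)})\;\propto\;\frac{g^{(p)}(q(\bm{\beta}))}{\sqrt{|\Omega|}}\,G_{nM+1}\big(D_*\bm{\beta};\Sigma_*,g^{(nM+1)}_{q(\bm{\beta})}\big),
\]
and then rewrite the elliptical CDF in terms of the correlation matrix $\bar\Sigma_*$ rather than the scale matrix $\Sigma_*$. By the affine equivariance of elliptical laws, if $\bm{Z}$ is elliptical with scale $\Sigma_*=\sigma_*\bar\Sigma_*\sigma_*$ and generator $g$, then $\sigma_*^{-1}\bm{Z}$ is elliptical with scale $\bar\Sigma_*$ and the \emph{same} generator; since $\sigma_*$ is diagonal and positive this preserves componentwise inequalities, giving $G_{nM+1}(D_*\bm{\beta};\Sigma_*,g^{(nM+1)}_{q(\bm{\beta})})=G_{nM+1}(\sigma_*^{-1}D_*\bm{\beta};\bar\Sigma_*,g^{(nM+1)}_{q(\bm{\beta})})$.

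With this in hand the parameter identification is direct. Taking the $\mathcal{SUE}_{p,nM+1}$ scale to be $\Omega$ (so the role of $\sigma$ is played by $\omega=\mathrm{diag}(\Omega)^{1/2}$), the choices $\Lambda_{post}=\sigma_*^{-1}D_*\omega$ and $\bm{\tau}_{post}=\sigma_*^{-1}D_*\bm{\mu}$ give $\bm{\tau}_{post}+\Lambda_{post}\omega^{-1}(\bm{\beta}-\bm{\mu})=\sigma_*^{-1}D_*\bm{\mu}+\sigma_*^{-1}D_*(\bm{\beta}-\bm{\mu})=\sigma_*^{-1}D_*\bm{\beta}$, which matches the argument of the rescaled CDF. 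I would also check that the conditional generators coincide: in the $\mathcal{SUE}$ density the generator is $g^{(m)}_{q(\bm{x})}$ with $d+m=p+nM+1$ and $q(\bm{x})=(\bm{\beta}-\bm{\mu})^\top\Omega^{-1}(\bm{\beta}-\bm{\mu})=q(\bm{\beta})$, which is exactly the $g^{(nM+1)}_{q(\bm{\beta})}$ appearing through (\ref{modelassumption}). Thus the unnormalized posterior agrees with the $\mathcal{SUE}_{p,nM+1}(\bm{\mu},\Omega,\Lambda_{post},\bm{\tau}_{post},\bar\Sigma_*,g^{(p+nM+1)})$ density up to the $\bm{\beta}$-free denominator $G_{nM+1}(\bm{\tau}_{post};\bar\Sigma_*+\Lambda_{post}\bar\Omega\Lambda_{post}^\top,g^{(nM+1)})$, where $\bar\Omega$ is the correlation matrix corresponding to $\Omega$.

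To finish, since both the posterior and the $\mathcal{SUE}$ density are probability densities on $\mathbb{R}^p$ that integrate to one and agree up to a multiplicative constant, they must be identical; as a byproduct this identifies the marginal likelihood with the $\mathcal{SUE}$ normalizing constant, avoiding any separate integration. I expect the main obstacle to be the bookkeeping in the scale-to-correlation rescaling of the elliptical CDF—specifically, verifying that the $\bm{\beta}$-dependent conditional generator $g^{(nM+1)}_{q(\bm{\beta})}$ is carried through that rescaling unchanged and coincides with the generator prescribed by the $\mathcal{SUE}$ definition—while the remaining steps are a routine parameter match.
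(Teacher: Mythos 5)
Your proposal is correct and follows essentially the same route as the paper's own proof: apply Bayes' theorem with the likelihood from Lemma~\ref{lemma1} and the elliptical prior from (\ref{betaprior}), rescale the elliptical CDF from the scale matrix $\Sigma_*$ to the correlation matrix $\bar\Sigma_*$, and rewrite $\sigma_*^{-1}D_*\bm{\beta}$ as $\bm{\tau}_{post}+\Lambda_{post}\omega^{-1}(\bm{\beta}-\bm{\mu})$ to match the $\mathcal{SUE}$ density kernel. Your added checks---that $G(0;\cdot)=1/2$ collapses the prior to a purely elliptical density, and that the conditional generator $g^{(nM+1)}_{q(\bm{\beta})}$ coincides with the one prescribed by the $\mathcal{SUE}$ definition with $d=p$, $m=nM+1$---are details the paper leaves implicit, and they are verified correctly.
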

\begin{proof}
The posterior density of the coefficients $\bm{\beta}$ is 
\begin{equation*}
    p(\bm{\beta} \mid \bm{y},\Sigma,\bm{\alpha},g^{(p+nM+1)}) \propto p(\bm{y} \mid \bm{\beta},\Sigma,\bm{\alpha},g^{(p+nM+1)}) \cdot p(\bm{\beta}\mid g^{(p+nM+1)}).
\end{equation*}
Using Lemma \ref{lemma1} and the assumption (\ref{betaprior}), we have
\begin{align*}
    &\quad p(\bm{\beta} \mid \bm{y},\Sigma,\bm{\alpha},g^{(p+nM+1)}) \\
    &\propto G_{nM+1}(D_*\bm{\beta}; \Sigma_{*},g^{(nM+1)}_{q(\bm{\beta})}) \cdot g^{(p)}\big((\bm{\beta}-\bm{\mu})^\top\Omega^{-1}(\bm{\beta}-\bm{\mu})\big) \\
    &= G_{nM+1}(\sigma_*^{-1}D_*\bm{\beta}; \bar\Sigma_{*},g^{(nM+1)}_{q(\bm{\beta})}) \cdot g^{(p)}\big((\bm{\beta}-\bm{\mu})^\top\Omega^{-1}(\bm{\beta}-\bm{\mu})\big) \\
    &= G_{nM+1}\big(\sigma_*^{-1}D_*\bm{\mu}+\sigma_*^{-1}D_*(\bm{\beta}-\bm{\mu}); \bar\Sigma_{*},g^{(nM+1)}_{q(\bm{\beta})}\big) \cdot g^{(p)}\big((\bm{\beta}-\bm{\mu})^\top\Omega^{-1}(\bm{\beta}-\bm{\mu})\big) \\
    &= G_{nM+1}\big(\sigma_*^{-1}D_*\bm{\mu}+\sigma_*^{-1}D_* \omega \omega^{-1}(\bm{\beta}-\bm{\mu}); \bar\Sigma_{*},g^{(nM+1)}_{q(\bm{\beta})}\big) \cdot g^{(p)}\big((\bm{\beta}-\bm{\mu})^\top\Omega^{-1}(\bm{\beta}-\bm{\mu})\big) \\
    &= G_{nM+1}(\bm{\tau}_{\rm post}+\Lambda_{\rm post} \omega^{-1}(\bm{\beta}-\bm{\mu}); \bar\Sigma_{*},g^{(nM+1)}_{q(\bm{\beta})}) \cdot g^{(p)}\big((\bm{\beta}-\bm{\mu})^\top\Omega^{-1}(\bm{\beta}-\bm{\mu})\big).
\end{align*}
Hence, $(\bm{\beta} \mid \bm{y},\Sigma,\bm{\alpha},g^{(p+nM+1)}) \sim \mathcal{SUE}_{p,nM+1}(\bm{\mu}_{post},\Omega_{post},\Lambda_{post},\bm{\tau}_{post},\Gamma_{post},g^{(p+nM+1)})$.
\end{proof}

In Bayesian regression we are mostly interested in the posterior marginals, their moments and more complex functionals such as measures of dependence and credible intervals. 
Thanks to the fundamental property of the unified skew elliptical distribution that it is closed under marginalization, conditioning and affine transformations, this type of inference is simplified.
We refer to \cite{AreGenton2010} for details on how to obtain the parameters of the marginal distribution, conditional distribution and the distribution after affine transformations.
As for the calculation of the posterior moments and credible intervals, numerical integration of the marginal posterior densities can be used.
When interest is in the posterior moments, another approach is to use the moment generating function.
We refer to Section 5 of \cite{AreGenton2010} for derivations of the moment generating function and moments of the unified skew elliptical distribution.

\subsection{Special Case 1: the Skew-Normal Link Model}
The skew-normal link model is obtained when $g^{(p+nM+1)}$ in model (\ref{SEmodel}) is the $(p+nM+1)$-variate normal density generator.
In this case, the joint distributional assumption of $\bm{\beta}$ and $\bm{\varepsilon}$ becomes
\[
\left(
    \begin{array}{c}
    \bm{\beta} \\
    \bm{\varepsilon} \\
    \end{array}
    \right) \Bigg| \Sigma,\bm{\alpha} \sim \mathcal{SN}_{p+nM}\left(\left(
                         \begin{array}{c}
                             \bm{\mu} \\
                             0 \\
                         \end{array}
                         \right),\left(
                                 \begin{array}{cc}
                                    \Omega & 0 \\
                                    0 & I_n\otimes\Sigma \\
                                \end{array}
                                 \right), \left(
                                          \begin{array}{c}
                                               \bm{0}  \\
                                               \bm{\alpha}
                                          \end{array}
                                          \right)
    \right),
\]
which is equivalent to assuming
\[
\bm{\beta} \sim \mathcal{N}_p(\bm{\mu},\Omega), \quad \bm{\varepsilon}\mid \Sigma,\bm{\alpha} \sim \mathcal{SN}_{nM}(\bm{0},\mathrm{I}_n\otimes\Sigma,\bm{\alpha}),
\]
with the random vectors $\bm{\beta}$ and $\bm{\varepsilon}$ independent of each other given $\Sigma$ and $\bm{\alpha}$.
This implies that the prior for $\bm{\beta}$ coincides with the typical weakly informative Gaussian prior, and we use a multivariate SN distribution to model the dependence of the data at the latent level.
When the skewness parameter $\bm{\alpha}=\bm{0}$, the skew-normal link model reduces to the well-known multivariate probit model \citep{Ashford1970,Chib1998}.

Before analyzing the posterior of the regression coefficients, we first give the explicit expression of the joint probability mass function $p(\bm{y} \mid \bm{\beta},\Sigma,\bm{\alpha})$.
Taking $g^{(nM+1)}$ in Lemma \ref{lemma1} as the normal density generator, we directly get
\[
p(\bm{y} \mid \bm{\beta},\Sigma,\bm{\alpha}) = 2\Phi_{nM+1}(D_*\bm{\beta}; \Sigma_*), 
\]
where $D_*$ and $\Sigma_*$ are defined in Lemma \ref{lemma1}.
Similarly to the multivariate probit model, if the covariates are not shared by all the responses, the matrix $\Sigma$ has to be a correlation matrix for identifiability reasons.
This can be seen by considering $\Omega=(\omega_{jm})=C\Sigma C^\top$ with $C={\rm diag}(\Omega)^{1/2}$, $\Tilde{\bm{\beta}} = (\Tilde{\bm{\beta}}_1^\top,\dots,\Tilde{\bm{\beta}}_M^\top)^\top$ with $\Tilde{\bm{\beta}}_j=\omega_{jj}\bm{\beta}_j$, and  $\Tilde{\bm{\alpha}} = {\rm diag}(\mathrm{I}_n\otimes C^{-1})\bm{\alpha}$, which gives $p(\bm{y} \mid \bm{\beta},\Sigma,\bm{\alpha}) = p(\bm{y} \mid \Tilde{\bm{\beta}},\Omega,\Tilde{\bm{\alpha}})$.
If the covariates are shared by all the responses $\bm{Y}$, then $\Sigma$ does not need to be a correlation matrix.
However, if we assume that all the diagonal entries in $\Sigma$ are equal, then $\Sigma$ has to be a correlation matrix because $p(\bm{y} \mid \bm{\beta},\Sigma,\bm{\alpha}) = p(\bm{y} \mid b\bm{\beta},b^2\Sigma,\bm{\alpha})$ for any positive number $b$.

When $\bm{\alpha}=\bm{0}$, we get $p(\bm{y} \mid \bm{\beta},\Sigma)=\Phi_{nM}(D X\bm{\beta}; \mathrm{I}_n\otimes\Sigma)$.
This result simplifies the calculation of the joint probability of the multivariate skew-normal link model and also the probit model by expressing it in terms of the multivariate normal distribution function.
Therefore, existing fast algorithms for the calculation of the multivariate normal probabilities can be utilized especially in high dimensions; see \cite{Genton2018} and \cite{Cao2020}.
Now we present the result that for the skew-normal link model the posterior of $\bm{\beta}$ coincides with a unified skew normal distribution, which directly follows from Theorem \ref{theorem1} by taking $g^{(nM+1)}$ as the $(nM+1)$-variate normal density generator.

\begin{corollary}\label{SNposterior}
Let $\bm{y}=(\bm{y}_1^\top,\dots,\bm{y}_n^\top)^\top$ be observations from the multivariate skew-normal link model and $X=(X_1^\top,\dots,X_n^\top)^\top$ be the corresponding data matrix.
Then
\[
(\bm{\beta} \mid \bm{y},\Sigma,\bm{\alpha}) \sim \mathcal{SUN}_{p,nM+1}(\bm{\mu}_{post},\Omega_{post},\Lambda_{post},\bm{\tau}_{post},\Gamma_{post}),
\]
where $\bm{\mu}_{post}, \Omega_{post}, \Lambda_{post}, \bm{\tau}_{post}, \Gamma_{post}$ are defined in Theorem \ref{theorem1}.
\end{corollary}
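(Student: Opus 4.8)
The plan is to obtain the corollary as a direct specialization of Theorem \ref{theorem1}. Since the skew-normal link model is precisely the case in which $g^{(p+nM+1)}$ is the $(p+nM+1)$-variate normal density generator, I would begin by invoking Theorem \ref{theorem1} verbatim with this choice of generator, which immediately yields
\[
(\bm{\beta}\mid \bm{y},\Sigma,\bm{\alpha},g^{(p+nM+1)}) \sim \mathcal{SUE}_{p,nM+1}(\bm{\mu}_{post},\Omega_{post},\Lambda_{post},\bm{\tau}_{post},\Gamma_{post},g^{(p+nM+1)}),
\]
with the five posterior parameters exactly as defined in the theorem. Crucially, these parameters are built only from $D_*$, $\Sigma_*$, $\bm{\mu}$, $\omega$, and $\sigma_*$, none of which references the form of the density generator, so they carry over unchanged and require no recomputation.

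The only substantive step is to confirm that the $\mathcal{SUE}$ density collapses to the $\mathcal{SUN}$ density (\ref{SUNpdf}) when the generator is Gaussian. Here I would appeal to the definition already recorded in Section 2.2, where the unified skew-normal distribution is introduced precisely as the $\mathcal{SUE}$ subclass obtained by taking $g^{(d+m)}$ to be the normal generator. Concretely, $g^{(d)}\big((\bm{x}-\bm{\xi})^\top\Sigma^{-1}(\bm{x}-\bm{\xi})\big)/\sqrt{|\Sigma|}$ becomes $\phi_d(\bm{x}-\bm{\xi};\Sigma)$, and both elliptical CDF factors in the numerator and denominator reduce to the Gaussian CDF $\Phi_m(\cdot\,;\cdot)$, so the density ratio matches the one in (\ref{SUNpdf}) term by term.

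The point deserving the most care, and the only place where the Gaussian assumption does any real work, is the conditional density generator $g^{(m)}_{q(\bm{x})}$. For a general elliptical generator this depends on $q(\bm{x})$ through $g^{(d+m)}\{s+q(\bm{x})\}/g^{(d)}\{q(\bm{x})\}$, so the skewing CDF retains a residual dependence on $\bm{x}$. For the normal generator, however, the exponential form gives $g^{(m)}_{q(\bm{x})}(s)\propto\exp(-s/2)$ independently of $q(\bm{x})$, so $g^{(m)}_{q(\bm{x})}$ coincides with the marginal generator $g^{(m)}$; this is the same mechanism, noted after (\ref{modelassumption}), by which the weak dependence between $\bm{\beta}$ and $\bm{\varepsilon}$ is broken in the Gaussian case. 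Once this reduction is verified, the $\mathcal{SUE}$ produced by the specialized Theorem \ref{theorem1} is by definition the $\mathcal{SUN}_{p,nM+1}(\bm{\mu}_{post},\Omega_{post},\Lambda_{post},\bm{\tau}_{post},\Gamma_{post})$, which completes the argument. I expect no genuine obstacle here: the work is entirely bookkeeping of the generator specialization, and the substantive content already resides in Theorem \ref{theorem1}.
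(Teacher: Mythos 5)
Your proposal is correct and takes essentially the same route as the paper, which likewise obtains the corollary directly from Theorem \ref{theorem1} by specializing the density generator to the Gaussian one and appealing to the definition of the $\mathcal{SUN}$ as the Gaussian-generator subclass of the $\mathcal{SUE}$. Your explicit check that $g^{(m)}_{q(\bm{x})}(s)=(2\pi)^{-m/2}\exp(-s/2)=g^{(m)}(s)$, so the skewing factor loses its residual dependence on $q(\bm{x})$, is exactly the detail the paper leaves implicit in its Section 2.2 definition, and it is verified correctly.
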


The unified skew normal distribution, a subclass of the unified skew elliptical family, is closed under marginalization, conditioning and affine transformations \citep{AreAzza2006,AreGenton2010}.
This property is useful for certain posterior inferences, such as the posterior marginals or their moments.
When interest is in sampling from the posterior distribution, the convolution-type stochastic representation of the unified skew normal random vector is very useful. Specifically, using Equation (8) in \cite{AreGenton2010}, $(\bm{\beta} \mid \bm{y},\Sigma,\bm{\alpha})$ has the following stochastic representation 
\begin{equation*}\label{SUNstochstic}
(\bm{\beta} \mid \bm{y},\Sigma,\bm{\alpha}) \equdist \bm{\mu} + \bm{V}_0 + \Omega D_*^\top (D_*\Omega D_*^\top + \Sigma_*)^{-1} s \bm{V}_1,
\end{equation*}
where $\equdist$ means equality in distribution, $s={\rm diag}(D_*\Omega D_*^\top+\Sigma_*)^{1/2}\in\mathR^{(nM+1)\times (nM+1)}$, $\bm{V}_0 \sim \mathcal{N}_p\big(\bm{0},\Omega-\Omega D_*^\top (D_*\Omega D_*^\top + \Sigma_*)^{-1} D_* \Omega\big)$ is independent of $\bm{V}_1$, which follows a $(nM+1)$-variate truncated normal distribution with location parameter $\bm{0}$, covariance matrix $s^{-1}(D_*\Omega D_*^\top + \Sigma_*)s^{-1}$ and truncated below the level $-s^{-1}D_*\bm{\mu}$.
This stochastic representation facilitates exact simulation from the posterior distribution; see Algorithm 1 of \cite{Durante2019}.

\subsection{Special Case 2: the Skew-$t$ Link Model}\label{STcase}
When $g^{(p+nM+1)}$ in model (\ref{SEmodel}) is the $(p+nM+1)$-variate Student's $t$ density generator with $\nu$ degrees of freedom, we get the skew-$t$ link model.
Specifically, the joint distributional assumption of $\bm{\beta}$ and $\bm{\varepsilon}$ is
\[
\left(
    \begin{array}{c}
    \bm{\beta} \\
    \bm{\varepsilon} \\
    \end{array}
    \right) \Bigg| \Sigma,\bm{\alpha},\nu \sim \mathcal{ST}_{p+nM}\left(\left(
                         \begin{array}{c}
                             \bm{\mu} \\
                             0 \\
                         \end{array}
                         \right),\left(
                                 \begin{array}{cc}
                                    \Omega & 0 \\
                                    0 & I_n\otimes\Sigma \\
                                \end{array}
                                 \right), \left(
                                          \begin{array}{c}
                                               \bm{0}  \\
                                               \bm{\alpha}
                                          \end{array}
                                          \right), \nu
    \right),
\]
which is equivalent to assuming
\begin{align*}
    \bm{\beta}\mid\nu &\sim \mathcal{T}_p(\bm{\mu},\Omega,\nu),\\
    \bm{\varepsilon}\mid \bm{\beta},\Sigma,\bm{\alpha},\nu &\sim \mathcal{ST}_{nM}\Bigg(\bm{0},\frac{\nu+(\bm{\beta}-\bm{\mu})^\top\Omega^{-1}(\bm{\beta}-\bm{\mu})}{\nu+p}\big(\mathrm{I}_n\otimes\Sigma\big),\bm{\alpha}, \nu+p \Bigg),
\end{align*}
where $\mathcal{T}_p(\bm{\mu},\Omega,\nu)$ denotes the Student's $t$ distribution with location parameter vector $\bm{\mu}$, dispersion matrix $\Omega$ and degrees of freedom $\nu$.
The nonnegative parameter $\nu$ can be considered as a hyper-parameter which controls the dependence between $\bm{\beta}$ and $\bm{\varepsilon}$.
As $\nu$ increases the dependence decreases, and when $\nu\rightarrow\infty$, the skew-$t$ link model tends to the skew-normal link model and the dependence between them vanishes.

By taking $g^{(nM+1)}$ in Lemma \ref{lemma1} as the Student's $t$ density generator with $\nu$ degrees of freedom, we get the following explicit expression of the joint probability
\begin{equation}\label{STprob}
p(\bm{y} \mid \bm{\beta},\Sigma,\bm{\alpha},\nu) = 2T_{nM+1}\Bigg(\bigg(\frac{\nu+p}{\nu+(\bm{\beta}-\bm{\mu})^\top\Omega^{-1}(\bm{\beta}-\bm{\mu})}\bigg)^{1/2}D_*\bm{\beta}; \Sigma_*,\nu+p \Bigg).
\end{equation}
In practice, we typically assume a weakly informative prior for $\bm{\beta}$, which means $\nu$ is often large and $\Omega$ is often taken as a diagonal matrix with large diagonal entries.
This implies that $(\bm{\beta}-\bm{\mu})^\top\Omega^{-1}(\bm{\beta}-\bm{\mu})$ is often very small compared to $\nu$ and $\nu \approx \nu + (\bm{\beta}-\bm{\mu})^\top\Omega^{-1}(\bm{\beta}-\bm{\mu})$.
Hence, if we assume that the diagonal entries of $\Sigma$ are all equal, then $\Sigma$ needs to be a correlation matrix because $p(\bm{y} \mid \bm{\beta},\Sigma,\bm{\alpha},\nu) \approx p(\bm{y} \mid b\bm{\beta},b^2\Sigma,\bm{\alpha},\nu)$ for any positive number $b$.
We now state the result that for the skew-$t$ link model the posterior of $\bm{\beta}$ coincides with a unified skew $t$ distribution, which directly follows from Theorem \ref{theorem1} by taking $g^{(nM+1)}$ as the $(nM+1)$-variate Student's $t$ density generator with $\nu$ degrees of freedom.

\begin{corollary}\label{STposterior}
Let $\bm{y}=(\bm{y}_1^\top,\dots,\bm{y}_n^\top)^\top$ be observations from the multivariate skew-t link model and $X=(X_1^\top,\dots,X_n^\top)^\top$ be the corresponding data matrix.
Then
\[
(\bm{\beta} \mid \bm{y},\Sigma,\bm{\alpha},\nu) \sim \mathcal{SUT}_{p,nM+1}(\bm{\mu}_{post},\Omega_{post},\Lambda_{post},\nu,\bm{\tau}_{post},\Gamma_{post}),
\]
where $\bm{\mu}_{post}, \Omega_{post}, \Lambda_{post}, \bm{\tau}_{post}, \Gamma_{post}$ are defined in Theorem \ref{theorem1}.
\end{corollary}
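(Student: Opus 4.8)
The plan is to derive this corollary as an immediate specialization of Theorem~\ref{theorem1}, exactly as anticipated in the remark preceding it. As noted at the start of Section~\ref{STcase}, the skew-$t$ link model is nothing but model (\ref{SEmodel}) in which the density generator $g^{(p+nM+1)}$ is chosen to be the $(p+nM+1)$-variate Student's $t$ generator with $\nu$ degrees of freedom. With this single choice, every hypothesis of Theorem~\ref{theorem1} holds unchanged, so the theorem applies directly and gives
\[
(\bm{\beta} \mid \bm{y},\Sigma,\bm{\alpha},\nu) \sim \mathcal{SUE}_{p,nM+1}(\bm{\mu}_{post},\Omega_{post},\Lambda_{post},\bm{\tau}_{post},\Gamma_{post},g^{(p+nM+1)}),
\]
with the posterior parameters exactly those listed in Theorem~\ref{theorem1}. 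Because $\bm{\mu}_{post}$, $\Omega_{post}$, $\Lambda_{post}$, $\bm{\tau}_{post}$ and $\Gamma_{post}$ are built solely from $D_*$, $\Sigma_*$, $\sigma_*$, $\bar\Sigma_*$, $\omega$ and $\bm{\mu}$, none of which involves the generator, they carry over verbatim.

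It then remains to identify the $\mathcal{SUE}_{p,nM+1}$ law carrying the Student's $t$ generator with the $\mathcal{SUT}_{p,nM+1}$ law of Section~2.2. I would do this by substituting the $t$ generator into the generic $\mathcal{SUE}$ density and checking that it collapses to the $\mathcal{SUT}$ density. Two elliptical facts suffice: first, the $p$-variate marginal generator induced by the $(p+nM+1)$-variate $t$ generator is again a $t$ generator, so that the leading factor $g^{(p)}\big((\bm{\beta}-\bm{\mu})^\top\Omega^{-1}(\bm{\beta}-\bm{\mu})\big)/\sqrt{|\Omega|}$ is precisely the $\mathcal{T}_p(\bm{\mu},\Omega,\nu)$ density; second, the conditional generator $g^{(nM+1)}_{q(\bm{\beta})}$ of the multivariate $t$ turns the elliptical distribution function $G_{nM+1}(\,\cdot\,;\Gamma,g^{(nM+1)}_{q(\bm{\beta})})$ into the rescaled Student's $t$ distribution function $T_{nM+1}\big(\,\cdot\,(\tfrac{\nu+p}{q(\bm{\beta})+\nu})^{1/2};\Gamma,\nu+p\big)$, with the degrees of freedom augmented to $\nu+p$ and the radial rescaling factor $(\tfrac{\nu+p}{q(\bm{\beta})+\nu})^{1/2}$. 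Matching these against the $\mathcal{SUT}$ density of Section~2.2 term by term yields the claimed form, the normalizing denominator matching the corresponding centered $T_{nM+1}$ term.

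The one delicate point, and the step I expect to be the main obstacle, is the second elliptical fact: pinning down the explicit conditional density generator of the multivariate $t$ and confirming that the one-sided rescaling is exactly $(\tfrac{\nu+p}{q(\bm{\beta})+\nu})^{1/2}$ with the degrees of freedom shifting from $\nu$ to $\nu+p$. This is, however, the same conditioning computation that already underlies the closed-form skew-$t$ distribution function displayed in Section~2.1 and invoked inside Lemma~\ref{lemma1}; rather than redo it, I would simply cite that identity. As a sanity check, specializing the generic joint mass of Lemma~\ref{lemma1} to the $t$ generator must reproduce the explicit expression (\ref{STprob}), which already exhibits precisely this factor and the augmented degrees of freedom $\nu+p$. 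Once these two facts are in place the identification is complete, and all posterior parameters are inherited directly from Theorem~\ref{theorem1}.
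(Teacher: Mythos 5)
Your proposal is correct and follows essentially the same route as the paper, which proves the corollary simply by invoking Theorem~\ref{theorem1} with the Student's $t$ density generator; your additional verification that the resulting $\mathcal{SUE}_{p,nM+1}$ law with $t$ generator coincides with the $\mathcal{SUT}_{p,nM+1}$ density of Section~2.2 (marginal generator giving the $\mathcal{T}_p(\bm{\mu},\Omega,\nu)$ factor, conditional generator giving the rescaled $T_{nM+1}$ with degrees of freedom $\nu+p$) merely makes explicit what the paper leaves implicit. Your sanity check against the explicit joint probability (\ref{STprob}) is consistent with the paper and confirms the rescaling factor and degree-of-freedom shift.
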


Similarly to the unified skew normal distribution, the unified skew $t$ distribution is also closed under marginalization, conditioning and affine transformations \citep{AreGenton2010}, which simplifies the inference of the posterior marginals, their moments and functionals such as measures of dependence and credible intervals.
Thanks to the stochastic representation of the unified skew $t$ distribution, exact sampling from the distribution of $(\bm{\beta} \mid \bm{y},\Sigma,\bm{\alpha},\nu)$ is also feasible.
Specifically, using Equation (9) in \cite{AreGenton2010}, $(\bm{\beta} \mid \bm{y},\Sigma,\bm{\alpha},\nu)$ has the stochastic representation 
\begin{equation}\label{SUTstochstic}
(\bm{\beta} \mid \bm{y},\Sigma,\bm{\alpha},\nu)
\equdist \bm{\mu} + \bigg(\frac{\nu+\bm{U}_1^\top s(D_*\Omega D_*^\top + \Sigma_*)^{-1}s\bm{U}_1}{\nu+nM+1}\bigg)^{1/2}\bm{U}_0 + \Omega D_*^\top (D_*\Omega D_*^\top + \Sigma_*)^{-1} s \bm{U}_1,
\end{equation}
where $s={\rm diag}(D_*\Omega D_*^\top+\Sigma_*)^{1/2}\in\mathR^{(nM+1)\times (nM+1)}$, $\bm{U}_0 \sim \mathcal{T}_p\big(\bm{0},\Omega-\Omega D_*^\top (D_*\Omega D_*^\top + \Sigma_*)^{-1} D_* \Omega,\nu+nM+1\big)$ is independent of $\bm{U}_1$, which follows a $(nM+1)$-variate truncated $t$ distribution with location parameter vector $\bm{0}$, dispersion matrix $s^{-1}(D_*\Omega D_*^\top + \Sigma_*)s^{-1}$, degrees of freedom $\nu$, and truncated below the level $-s^{-1}D_*\bm{\mu}$.

\section{Simulation and Empirical Studies}
\subsection{Prior and Posterior for $\mathbb{\alpha}$ and $\Sigma$}\label{covsampling}
As the skew-normal link model is a limiting case of the skew-$t$ link model when the degrees of freedom $\nu$ tends to $\infty$, in this section we focus on the skew-$t$ link model and perform a simulation study and a real-data application.
To make the model parsimonious, in both the simulation study and empirical study we assume that the skewness parameters are the same across different observations, i.e., $\bm{\alpha}=(\alpha_1,\dots,\alpha_M,\dots,\alpha_1,\dots,\alpha_M)^\top\in\mathR^{nM}$, and $\Sigma$ is a correlation matrix, i.e., $\Sigma=\bar\Sigma$.
The assumption of a correlation matrix for $\Sigma$ is not very restrictive because it is approximately equivalent to assuming that all the diagonal entries in $\Sigma$ are equal, as we discussed in Section \ref{STcase}.
Now we specify the prior and posterior for the skewness parameter $\bm{\alpha}_s= (\alpha_1,\dots,\alpha_M)^\top$ and the correlation matrix $\bar\Sigma$.

Bayesian modeling of unstructured covariance or correlation matrices is a fundamental and difficult task because of the constraint of positive definiteness and the quadratic increase of the number of parameters with respect to the number of correlated variables.
More importantly, it is difficult to specify a prior for them\citep{Gelman2014}.
Typical priors for correlation matrices include the marginally uniform prior, the jointly uniform prior \citep{Barnard2000} and the so-called LKJ prior \citep{LKJ2009}.

The marginally uniform prior means that each non-diagonal element in the correlation matrix has a uniform marginal distribution over $[-1,1]$, whereas the jointly uniform prior means that the correlation matrix has a joint uniform distribution over the compact space of valid correlation matrices.
The LKJ prior is recommended in the \texttt{R} library \texttt{rstan} \citep{R2020} and has the form $\pi(\Bar{\Sigma}) \propto |\Bar{\Sigma}|^{\eta-1}$, where $|\Bar{\Sigma}|$ is the determinant of $\Bar{\Sigma}$ and $\eta > 0$ is the shape parameter of the LKJ distribution. 
The jointly uniform prior is a special case of the LKJ prior when $\eta = 1$.

In this work we adopt the jointly uniform prior for $\Bar{\Sigma}$ by setting $\eta = 1$ in the LKJ prior and specify an independent weakly informative Gaussian prior for $\bm{\alpha}_s$. 
Then, using Equation (\ref{STprob}), the joint posterior of $(\Bar{\Sigma},\bm{\alpha}_s)$ given the data and the regression coefficients is
\begin{align}\label{covposterior}
    p(\Bar{\Sigma},\bm{\alpha}_s \mid \bm{y},\bm{\beta},\nu) 
    &\propto 2T_{nM+1}\Bigg(\bigg(\frac{\nu+p}{\nu+(\bm{\beta}-\bm{\mu})^\top\Omega^{-1}(\bm{\beta}-\bm{\mu})}\bigg)^{1/2}D_*\bm{\beta}; \Sigma_*,\nu+p \Bigg) \pi(\bm{\alpha}_s).
\end{align}
We evaluate the multivariate Student's $t$ probability on the right-hand side of (\ref{covposterior}) using the R library \texttt{tlrmvnmvt}, which implements the classic Genz algorithm \citep{Genz1999,Genz2002} and exploits a tile-low-rank algorithm \citep{Cao2020} to speed up the computation of the multivariate normal and $t$ probabilities.
To avoid sampling the correlation matrix from a constrained space, we consider the reparametrization adopted in \cite{Smith2013} and \cite{Chin2020}, which re-expresses a correlation matrix in terms of the Cholesky factor of a positive definite matrix$\bar\Sigma = \Lambda_{\bar\Sigma}^{-1/2} L_{\bar\Sigma} L_{\bar\Sigma}^\top \Lambda_{\bar\Sigma}^{-1/2}$, where $L_{\bar\Sigma}$ is a lower triangular matrix and $\Lambda_{\bar\Sigma}={\rm diag}(L_{\bar\Sigma} L_{\bar\Sigma}^\top)$.
Here the diagonal entries of $L_{\bar\Sigma}$ are set to $1$ such that the correspondence between $L_{\bar\Sigma}$ and $\bar\Sigma$ is one-to-one.
We denote the collection of the $M(M-1)/2$ unconstrained parameters in $L_{\bar\Sigma}=(l_{ij})$ by $\bm{\theta}$, i.e., $\bm{\theta}=\{l_{ij}: i>j, i,j=1,\dots,M\}$, and the $M(M-1)/2$ constrained parameters in $\bar\Sigma$ by ${\rm vec}(\bar\Sigma)$, then using a change of variables we get the posterior of $(\bm{\theta},\bm{\alpha}_s)$ as
\[
p(\bm{\theta},\bm{\alpha}_s \mid \bm{y},\bm{\beta},\nu) = p(\Bar{\Sigma},\bm{\alpha}_s \mid \bm{y},\bm{\beta},\nu) |J| = p(\Bar{\Sigma},\bm{\alpha}_s \mid \bm{y},\bm{\beta},\nu) \prod_{i=1}^M \Big(1+\sum_{j<i}l_{ij}^2\Big)^{-(M+1)/2},
\]
where $|J|=|\partial {\rm vec}(\bar\Sigma)/\partial \bm{\theta}|$ is the determinant of the Jacobian matrix of this transformation.

As direct sampling from the distribution of $\bm{\theta},\bm{\alpha}_s \mid \bm{y},\bm{\beta},\nu$ is unknown, we propose to use a random walk Metropolis-Hastings algorithm to generate samples from it.
Specifically, we first sample $\bm{\alpha}_s^\prime$ from a proposal distribution with density $q(\cdot \mid \bm{\alpha}_s)$ and $\bm{\theta}^\prime$ from a proposal distribution with density $r(\cdot \mid \bm{\theta})$. 
Here we take both proposal densities $q$ and $r$ as symmetric normal densities, i.e., $\bm{\alpha}_s^\prime \mid \bm{\alpha}_s \sim \mathcal{N}_M(\bm{\alpha}_s, h_1 \mathrm{I}_M)$ and $\bm{\theta}^\prime \mid \bm{\theta} \sim \mathcal{N}_J(\bm{\theta}, h_2 \mathrm{I}_J), J = M(M-1)/2$.
Then the acceptance probability is
\[
\alpha((\bm{\alpha}_s, \bm{\theta}), (\bm{\alpha}_s^\prime, \bm{\theta}^\prime)) = \min\Big\{\frac{p(\bm{\theta}^\prime,\bm{\alpha}_s^\prime \mid \bm{y}, X, \bm{\beta})\mathrm{1}((\bm{\theta}^\prime, \bm{\alpha}_s^\prime)\in C) }
{p(\bm{\theta},\bm{\alpha}_s \mid \bm{y}, X, \bm{\beta})\mathrm{1}((\bm{\theta},\bm{\alpha}_s)\in C)}, 1\Big\},
\]
where $\mathrm{1}(\cdot)$ is the indicator function and $C$ is the space of all $(\bm{\theta},\bm{\alpha}_s)$ such that the resulting matrix $\bar\Sigma - \bm{\delta}\bm{\delta}^\top$ is positive definite with $\bm{\delta}=(1+\bm{\alpha}^\top\Bar{\Sigma}\bm{\alpha})^{-1/2}\Bar{\Sigma}\bm{\alpha}$.

\subsection{MCMC Sampling Scheme}
As sampling from the distribution of $(\bm{\beta} \mid \bm{y}, X,\Bar{\Sigma},\bm{\alpha})$ is feasible using (\ref{SUTstochstic}) and sampling from the distribution of $(\Bar{\Sigma},\bm{\alpha} \mid \bm{y}, X, \bm{\beta})$ has been described in Section \ref{covsampling}, we now combine them to construct an MCMC sampler for the multivariate skew-$t$ link model.

\vspace{.2cm}
\begin{algorithm}[H]
 Initialization: Set $\bm{\beta}^{(0)}, \Bar{\Sigma}^{(0)},\bm{\alpha}^{(0)}$ \;
 \For{iteration k from 1 to K}{
  [1] Sample $\bm{U}_0^{(k)}$ from $\mathcal{T}_p\big(\bm{0},\Omega-\Omega D_*^\top (D_*\Omega D_*^\top + \Sigma_*)^{-1} D_* \Omega,\nu+nM+1\big)$ (in R use \textit{rmvt})\; 
  [2] Sample $\bm{U}_1^{(k)}$ from a $(nM+1)$-variate truncated $t$ distribution with location parameter vector $\bm{0}$, dispersion matrix $s^{-1}(D_*\Omega D_*^\top + \Sigma_*)s^{-1}$, degrees of freedom $\nu$, and truncated below the level $-s^{-1}D_*\bm{\mu}$, using the accept-reject algorithm of \cite{Botev2017} (in R use \textit{mvrandt})\;
  [3] Compute $\bm{\beta}^{(k)}$ via $\bm{\beta}^{(k)}=\bm{\mu} + \big(\frac{\nu+(\bm{U}_1^{(k)})^\top s(D_*\Omega D_*^\top + \Sigma_*)^{-1}s\bm{U}_1^{(k)}}{\nu+nM+1}\big)^{1/2}\bm{U}_0^{(k)} + \Omega D_*^\top (D_*\Omega D_*^\top + \Sigma_*)^{-1} s \bm{U}_1^{(k)}$\;
  [4] Use the Metropolis-Hastings algorithm described in Section \ref{covsampling} to sample $(\bm{\theta}^{(k)},\bm{\alpha}_s^{(k)})$ from the distribution of $(\bm{\theta},\bm{\alpha}_s \mid \bm{y}, X, \bm{\beta}^{(k)})$, then return the resulting $\bar\Sigma^{(k)}$ and $\bm{\alpha}^{(k)}$.
 }
 Output: $(\bm{\beta}^{(1)},\Bar{\Sigma}^{(1)},\bm{\alpha}^{(1)}),\dots,(\bm{\beta}^{(K)},\Bar{\Sigma}^{(K)},\bm{\alpha}^{(K)})$ 
 \caption{MCMC sampling scheme for the multivariate ST link model}
 \label{algorithm1}
\end{algorithm}

\subsection{Simulation Study}
In this section, we conduct a simulation study to assess the performance of our proposed Algorithm \ref{algorithm1}.
We consider three different scenarios with different values for the degrees of freedom, i.e., $\nu=5,10,20$.
In each of the scenarios, we generate a dataset with sample size $n=50$, regression coefficients $\bm{\beta}=(\beta_1,\beta_2,\beta_3)^\top=(-1,0.5,-0.5)^\top$, skewness parameter $\bm{\alpha}_s=(\alpha_1,\alpha_2,\alpha_3)^\top=(2,0,-2)^\top$, and dispersion matrix
\[
\bar\Sigma = \left(
\begin{array}{ccc}
 1 & 0.5 & 0 \\
 0.5 & 1 & -0.5 \\
 0 & -0.5 & 1
\end{array}
\right).
\]
The first column of the data matrix $X$ is set to $\bm{1}$ to account for the intercept and the remaining entries in $X$ are generated from a standard normal distribution.
The intercept $\beta_1$ is chosen as $-1$ so as to obtain a highly-imbalanced dataset $\bm{y}$ with more than $80\%$ of the observations being equal to $0$.

For each of the scenarios, we fix $\nu$ to its true value, and then run Algorithm \ref{algorithm1} for 10000 iterations, discarding the first 3000 iterations as burn-in.
The prior for $\bm{\beta}$ is taken as $\mathcal{N}_p(0,25\mathrm{I}_p)$, the prior for $\bm{\alpha}_s$ is $\mathcal{N}_M(0,16)$, and the variances $h_1, h_2$ of the proposal normal densities in the Metropolis-Hastings algorithm described in Section \ref{covsampling} are taken as $h_1= h_2=0.09$.
Table \ref{tab:simu_estimates} displays the posterior estimates for each of scenarios.
The results show that the regression coefficients $\bm{\beta}$ can be quite well estimated in all scenarios, but the skewness parameter $\bm{\alpha}_s$ and dispersion matrix $\bar\Sigma$ are not easy to estimate.
This is expected as both $\bm{\alpha}_s$ and $\bar\Sigma$ determine the dependence between the different binary observations in $\bm{y}$ and such dependence enforced at the latent level cannot be easily estimated with a relatively small sample size $n$.
Moreover, the credible intervals appear to be more narrow for larger degrees of freedom $\nu$, which is due to the weaker dependence among observations (hence, larger effective sample size) implied by larger $\nu$.

\begin{table}
\centering
\caption{Posterior estimates for different scenarios in the simulation study}
\resizebox{\textwidth}{!}{\begin{tabular}{lcccccccccc}\toprule
\multicolumn{2}{c}{Scenario} &\multicolumn{3}{c}{$\nu=5$} &\multicolumn{3}{c}{$\nu=10$} &\multicolumn{3}{c}{$\nu=20$} \\
\cmidrule(r){3-5}
\cmidrule(r){6-8}
\cmidrule(r){9-11}
   & True & Est & Sd & $95\%$ CI & Est & Sd & $95\%$ CI & Est & Sd & $95\%$ CI \\\midrule
$\beta_1$ & -1 & -1.54 & 0.69 & (-3.26, -0.75) & -1.33 & 0.39 & (-2.27, -0.77) & -1.24 & 0.27 & (-1.86, -0.80) \\
$\beta_2$ & 0.5 & 0.78 & 0.39 & (0.31, 1.75) & 0.55 & 0.22 & (0.21, 1.06) & 0.52 & 0.18 & (0.21, 0.92) \\
$\beta_3$ & -0.5 & -0.46 & 0.28 & (-1.12, -0.10) & -0.76 & 0.27 & (-1.41, -0.35) & -0.71 & 0.21 & (-1.19, -0.37) \\
$\Bar{\Sigma}_{12}$ & 0.5 & 0.33 & 0.31 & (-0.35, 0.81) & 0.42 & 0.28 & (-0.18, 0.86) & 0.69 & 0.17 & (0.28, 0.94) \\
$\Bar{\Sigma}_{13}$ & 0 & -0.20 & 0.33 & (-0.79, 0.46) & -0.26 & 0.33 & (-0.81, 0.40) & -0.57 & 0.26 & (-0.93, 0.02) \\
$\Bar{\Sigma}_{23}$ & -0.5 & -0.38 & 0.32 & (-0.91, 0.29) & -0.54 & 0.31 & (-0.95, 0.21) & -0.52 & 0.30 & (-0.93, 0.20) \\
$\alpha_1$ & 2 & 0.50 & 3.62 & (-5.69, 7.46) & -1.75 & 3.24 & (-6.90, 5.35) & 0.66 & 3.69 & (-8.11, 6.66) \\
$\alpha_2$ & 0 & -2.89 & 3.33 & (-9.81, 2.74) & 0.39 & 3.18 & (-4.78, 8.31) & 2.88 & 4.16 & (-5.15, 11.32) \\
$\alpha_3$ & -2 & 2.62 & 3.72 & (-5.19, 9.20) & -0.49 & 4.59 & (-7.43, 9.79) & -0.81 & 3.02 & (-6.46, 5.53) \\
\bottomrule
\end{tabular}}
\label{tab:simu_estimates}
\end{table}

\subsection{Application to COVID-19 Pandemic Data}\label{application}
In this section we illustrate our methodology to COVID-19 pandemic data from different counties of the state of California, USA, freely downloaded from the California open data portal https://data.ca.gov.
The dataset contains the number of daily new confirmed cases and deaths from March 18, 2020, to November 24, 2020, in 58 counties of California.
There is a clear weekly cyclic pattern in this dataset, i.e., the numbers of new confirmed cases on weekdays are often much larger than those during the weekends.
This is possibly due to the fact that people tend to enjoy their weekends and go to the hospital for testing after the weekend, or some testing facilities are closed during weekends.
To avoid modeling this artificial cyclic pattern, we aggregate the data and consider the weekly new confirmed cases, resulting in $n=36$ weekly observations.
As $n$ is relatively small, we here only focus on the three most populous counties in California, i.e., Los Angeles, San Diego and Orange.
Our goal here is to jointly model the occurrence of extreme spikes in new weekly cases, i.e., ``abnormal'' weeks with respect to the overall expected trend, and to detect if the spikes are spatially correlated.
This informs us about the potential transmission modes of the virus between counties, and whether an outburst in one county may lead to an increased number of cases in another county.

To remove the obvious trend, we apply smoothing splines with five knots to the logarithm of each of the three time series, where the logarithm is used because most epidemics grow approximately exponentially during the initial phase \citep{Ma2020}.
Alternatively, one can try to fit the well-known susceptible-infected-recovered (SIR) model \citep{SIR1,SIR2} to remove the trend.
Unfortunately, this is not feasible with our dataset as it does not contain the number of daily recovered cases.
Figure \ref{smoothing} displays the observed data for the three counties, the smoothing splines for each time series and the resulting residuals.
We then consider a residual point as an extreme spike if it exceeds the empirical $90\%$ quantile of the corresponding time series, and we denote it as $1$; otherwise we denote it as $0$.
In this way, we get three imbalanced binary time series and we aim to model the dependence among them.
\begin{figure}
    \centering
    \includegraphics[scale=0.85]{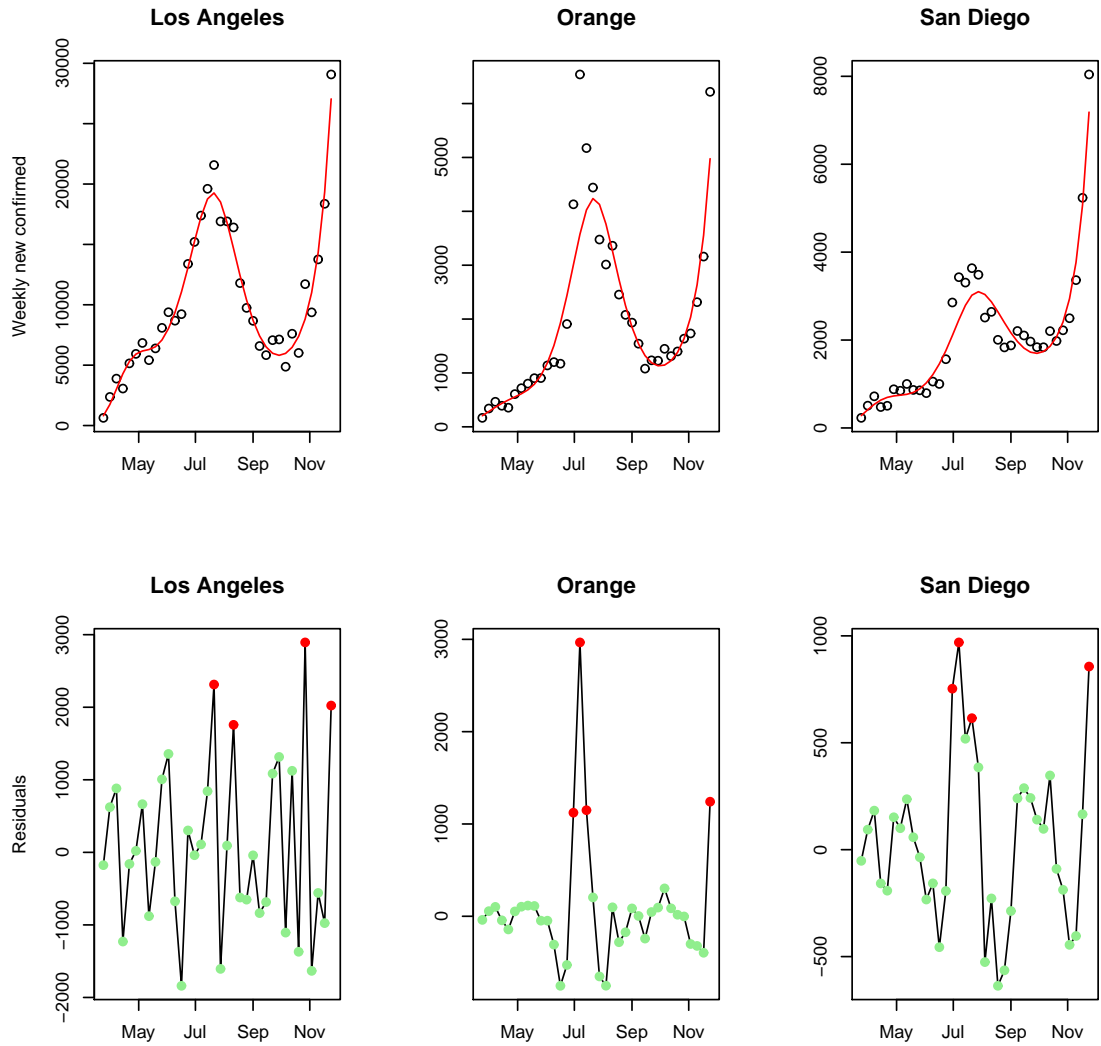}
    \caption{Upper panel: smoothing splines for the time series of weekly new confirmed cases at Los Angeles (left),  Orange (middle), and San Diego (right). Lower panel: the residuals obtained as the difference between the original data and the fitted splines, with red points considered as extreme spikes and green points as non-extreme values.}
    \label{smoothing}
\end{figure}

We consider three covariates in total, i.e., an intercept, one covariate as time, and another one as the square of time.
Following the recommendation of \cite{Gelman2008}, we standardize the two temporal predictors in a preliminary step to make them have mean $0$ and standard deviation $1$.
To assess the performance of the multivariate skew-normal link model, we consider  six models $\mathcal{M}_1,\mathcal{M}_2,\mathcal{M}_3,\mathcal{M}_4, \mathcal{M}_5, \mathcal{M}_6$ of different complexity.
$\mathcal{M}_1,\mathcal{M}_2,\mathcal{M}_3$ are the multivariate skew-$t$ link model with $\nu=5,10,20$, respectively, $\mathcal{M}_4$ is the multivariate skew-normal model (i.e., obtained as $\nu\to\infty$), $\mathcal{M}_5$ is the multivariate probit model (obtained with $\nu\to\infty$ and $\alpha=0$), and $\mathcal{M}_6$ is the independent probit model (obtained with $\nu\to\infty$, $\bar\Sigma=\mathrm{I}, \alpha=0$).

For each of these models, we run the Algorithm \ref{algorithm1} for 25000 iterations and remove the first 5000 samples as burn-in.
The prior for the regression parameters $\bm{\beta}$ is specified as $\mathcal{N}_p(\bm{0},25\mathrm{I}_p)$, and the prior for the skewness parameters is taken as $\mathcal{N}_M(\bm{0},16\mathrm{I}_M)$.
The variances of the proposal densities in the Metropolis-Hastings algorithm are taken as $h_1=h_2=0.09$.

Table \ref{tab:model_estimates} summarizes the estimation results for all the models.
The results show that the estimate of the intercept for all the models are almost the same and are significantly negative.
This is expected as $90\%$ of the observations are $0$ and only $10\%$ are $1$.
We also observe that the confidence intervals for the correlation and skewness parameters are generally quite large (as in the simulation study), implying that they are hard to estimate with only $n=36$ observations.
However, the correlation between the counties of Orange and San Diego, i.e., $\Bar{\Sigma}_{23}$, seems to be quite strong, as its posterior mean for models $\mathcal{M}_1, \mathcal{M}_2, \mathcal{M}_3, \mathcal{M}_4$, and $\mathcal{M}_5$ is consistently far from zero (with an estimate close to 0.78) and its $95\%$ credible interval always excludes zero.
This indicates that these two counties are more connected together in terms of extreme COVID-19 cases than the other pairs of counties considered, which sheds some light into the spread of the epidemic.
The extreme occurrences observed in the counties of Los Angeles and San Diego also seem fairly strongly interconnected since the estimate of $\Sigma_{13}$ is also quite high, yet to a milder degree.

\begin{table}
\centering
\caption{Posterior estimates for different models fitted in our COVID-19 data application in Section \ref{application}}
\resizebox{\textwidth}{!}{\begin{tabular}{lccccccccc}\toprule
&\multicolumn{3}{c}{$\mathcal{M}_1$} &\multicolumn{3}{c}{$\mathcal{M}_2$} &\multicolumn{3}{c}{$\mathcal{M}_3$} \\
\cmidrule(r){2-4}
\cmidrule(r){5-7}
\cmidrule(r){8-10}
   & Est & Sd & $95\%$ CI & Est & Sd & $95\%$ CI & Est & Sd & $95\%$ CI \\\midrule
Intercept & -1.67 & 0.77 & (-3.62, -0.76) & -1.58 & 0.49 & (-2.76, -0.86) & -1.48 & 0.37 & (-2.34, -0.89) \\
Time & 1.72 & 1.68 & (-0.67, 5.80) & 1.62 & 1.36 & (-0.54, 4.75) & 1.54 & 1.27 & (-0.60, 4.43) \\
Time$^2$ & -1.19 & 1.38 & (-4.47, 0.99) & -1.13 & 1.15 & (-3.73, 0.81) & -1.07 & 1.08 & (-3.45, 0.82) \\
$\Bar{\Sigma}_{12}$ & 0.24 & 0.29 & (-0.36,0.74) & 0.23 & 0.46 & (-0.33, 0.72) & 0.24 & 0.28 & (-0.34, 0.73) \\
$\Bar{\Sigma}_{13}$ & 0.54 & 0.24 & (-0.01, 0.89) & 0.51 & 0.23 & (0.02, 0.87) & 0.51 & 0.24 & (-0.03, 0.88) \\
$\Bar{\Sigma}_{23}$ & 0.78 & 0.16 & (0.40, 0.97) & 0.76 & 0.15 & (0.40, 0.97) & 0.73 & 0.17 & (0.29 0.95) \\
$\alpha_1$ & 1.86 & 4.69 & (-6.81, 10.71) & -2.04 & 3.13 & (-7.31, 4.47) & -0.26 & 2.94 & (-5.64, 5.54) \\
$\alpha_2$ & -0.82 & 3.49 & (-6.74, 6.66) & -0.45 & 3.13 & (-6.00, 4.89) & 1.05 & 3.41 & (-5.64, 6.82) \\
$\alpha_3$ & -0.42 & 3.02 & (-7.13, 4.91) & 0.05 & 4.37 & (-7.02, 8.75) & -0.20 & 4.83 & (-10.34, 7.70) \\
\midrule
 &\multicolumn{3}{c}{$\mathcal{M}_4$} &\multicolumn{3}{c}{$\mathcal{M}_5$} &\multicolumn{3}{c}{$\mathcal{M}_6$} \\
\cmidrule(r){2-4}
\cmidrule(r){5-7}
\cmidrule(r){8-10}
   & Est & Sd & $95\%$ CI & Est & Sd & $95\%$ CI & Est & Sd & $95\%$ CI \\\midrule
Intercept & -1.40 & 0.27 & (-1.98, -0.92) & -1.42 & 0.26 & (-1.97, -0.96) & -1.36 & 0.20 & (-1.77, -1.00) \\
Time & 1.47 & 1.18 & (-0.62, 4.05) & 1.47 & 1.18 & (-0.56, 4.05) & 1.23 & 0.89 & (-0.37, 3.13) \\
Time$^2$ & -1.01 & 1.02 & (-3.13, 0.86) & -1.02 & 1.02 & (-3.17, 0.79) & -0.84 & 0.77 & (-2.45, 0.59) \\
$\Bar{\Sigma}_{12}$ & 0.27 & 0.28 & (-0.32, 0.76) & 0.25 & 0.30 & (-0.34, 0.77) & \\
$\Bar{\Sigma}_{13}$ & 0.62 & 0.24 & (0.01, 0.93) & 0.53 & 0.24 & (0.01, 0.90) &  \\
$\Bar{\Sigma}_{23}$ & 0.78 & 0.16 & (0.37, 0.98) & 0.74 & 0.16 & (0.34, 0.95) & \\
$\alpha_1$ & 1.65 & 5.16 & (-5.29, 11.63) &  \\
$\alpha_2$ & -0.39 & 3.08 & (-6.83, 6.43) & \\
$\alpha_3$ & 0.39 & 2.54 & (-4.54, 5.53) &  \\
\bottomrule
\end{tabular}}
\label{tab:model_estimates}
\end{table}

To compare the different fitted models, we use the Deviance Information Criterion (DIC) proposed by \cite{DICmeasure}.
The DIC is the Bayesian analogue of the Akaike Information Criterion (AIC) and is defined as
\[
\text{DIC} = D(\Bar{\bm{\tau}}) + 2p_D,
\]
where $\bm{\tau}$ denotes the collection of all the parameters, $\Bar{\bm{\tau}} = \mathrm{E}[\bm{\tau}\mid \bm{y}]$ is its posterior mean, $D(\cdot)$ is a deviance function and $p_D=\mathrm{E}[D(\bm{\tau})\mid \bm{y}]-D(\Bar{\bm{\tau}})$ is the effective number of model parameters.
Here we take the deviance function $D(\bm{\tau})$ as $-2\log p(\bm{y} \mid \bm{\beta},\Bar{\Sigma},\bm{\alpha},\nu)$ when the model is the skew-$t$ link model, or $-2\log p(\bm{y} \mid \bm{\beta},\Bar{\Sigma},\bm{\alpha})$ when the model is the skew-normal link model, and estimate $\mathrm{E}[D(\bm{\tau})\mid \bm{y}]$ by Monte Carlo using the samples generated from Algorithm \ref{algorithm1}.
The smaller the DIC value, the better the model's goodness-of-fit and predictive performance.
We refer to \cite{DICmeasure} for other properties about the DIC measure.

Table \ref{tab:model_DIC} reports the estimated DIC values for the six different models.
The results show that the multivariate skew-normal model $\mathcal{M}_4$ provides the best fit to the data despite its high complexity, the multivariate probit model $\mathcal{M}_5$ is the second best, and the independent symmetric probit model $\mathcal{M}_6$ is the worst.
This has two major implications.
The first is that spatial dependence plays an important role in the spread of the epidemic and ignoring the correlation would lead to a poor fit of the extreme spikes.
The second is that adding the skewness parameter indeed improves the model's flexibility and can provide a better fit to our highly imbalanced dataset.
\begin{table}
\centering
\caption{Estimated DIC values for the different models fitted in our COVID-19 data application in Section \ref{application}}
\begin{tabular}{ccc}\toprule
Model & $\#$ of parameters  &  DIC \\ \midrule
$\mathcal{M}_1$ & 9 & 67.93  \\
$\mathcal{M}_2$ & 9 & 68.06  \\
$\mathcal{M}_3$ & 9 & 67.84  \\
$\mathcal{M}_4$ & 9 & $\bm{65.77}$  \\
$\mathcal{M}_5$ & 6 & 67.12  \\
$\mathcal{M}_6$ & 3 & 78.97  \\
\bottomrule
\end{tabular}
\label{tab:model_DIC}
\end{table}

\section{Conclusion}

Although we here focus on the skew-elliptical link model, the result of a closed-form posterior for the regression coefficients could also be obtained if we consider a more flexible class of distributions for the assumption (\ref{modelassumption}).
In fact, if $\bm{\varepsilon}\mid \bm{\beta},\Sigma,\bm{\alpha}$ has a distribution which is closed under affine transformation, following the proof of Lemma \ref{lemma1} and Theorem \ref{theorem1}, one can show that the posterior of $\bm{\beta}$ coincides with a fundamental skew distribution \citep{AreGenton2005}.
This novel result opens up new avenues for the development of skewed link models for correlated binary data.

There are various directions for future research.
As the number of observations in our dataset is relatively small, we chose not to consider too many covariates and restricted the number of counties.
An interesting extension of our real data application would be to consider a larger dataset with more informative covariates, such as daily weather information or population migration between different counties. 
Adding such extra covariates could potentially fit the data better and provide a more detailed and informed explanation of the spread of epidemic.
Another interesting methodological extension is to improve Algorithm \ref{algorithm1}.
As we used the accept-reject algorithm of \cite{Botev2017} within Algorithm \ref{algorithm1} to sample from a multivariate truncated $t$ distribution, its lack of scalability to higher dimensions is inevitably inherited.
Therefore, more efficient algorithms to sample from high-dimensional truncated normal and $t$ distributions would significantly improve the speed of algorithm \ref{algorithm1}.
Finally, in the simulation study and data application we chose to fix the degrees of freedom $\nu$ to three different values to facilitate inference.
If one has many more observations and a more efficient algorithm to sample from high-dimensional truncated $t$ distribution, one can alternatively include the estimation of $\nu$ in the Metropolis-Hastings algorithm described in Section \ref{covsampling}.


\small{
\bibliography{reference}
}
\end{document}